\newcommand\escapetikz[1]{}
\newcommand\opn[1]{\operatorname{#1}}
\newcommand\mc[1]{{\mathcal{#1}}}
\newcommand\emp[1]{\textit{\textbf{#1}}}
\newcommand\nofrac[2]{#1/#2}
\newcommand\q{d}
\newcommand\lift{\opn{lift}}
\newcommand\trim{\text{\textup{\texttt{trim}}}}
\newcommand\Trim{\text{\textup{\texttt{Trim}}}}
\newcommand\GS{{\mc Z}}
\newcommand\percent[1]{}
\newcommand\supsp{\succeq}
\newcommand\subsp{\preceq}
\newcommand\lft{{\text{L}}}
\newcommand\rgt{{\text{R}}}
\newcommand\sphere{\mc S}
\newcommand\wdth{{n_x}}
\newcommand\hte{{n_y}}
\newcommand\ket[1]{|#1\rangle}
\newcommand\bra[1]{\langle#1|}
\newcommand\bracket[2]{\langle#1|#2\rangle}
\newcommand\Z{\GS}
\newcommand\V{\mc V}
\newcommand\CC{\mathbb{C}}
\newcommand\W{\mc W}
\newcommand\Y{\mc Y}
\newcommand\id{\opn{I}}
\renewcommand\H{\mc H}
\renewcommand\P{P}
\newcommand\AGSP{K}
\newcommand\PAP{\mc K}
\newcommand\B{\mc B}
\newcommand\ot\leftarrow
\newcommand\tr{\opn{tr}}
\newcommand\one{1\kern-.7ex1}
\newcommand\vdim[1]{\opn{dim}(#1)}
\newcommand\Hl{\H_\lft}
\newcommand\Hr{\H_\rgt}
\newcommand\dd{{\tilde d}}
\newcommand\NN{\mathbb N}
\newcommand\Dbound{{D}}
\newcommand\aand{\qquad\text{and}\qquad}
\renewcommand\Bar[1]{\bar{\bar{#1}}}
\providecommand\opn[1]{\operatorname{#1}}
\providecommand\NN{\mathbb N}
\providecommand\mc[1]{\mathcal{#1}}
\renewcommand\H{\mc H}
\providecommand\lft{1}
\providecommand\rgt{2}
\providecommand\Hl{\mc H_{\lft}}
\providecommand\Hr{\mc H_{\rgt}}
\providecommand\iiff{\quad\iff\quad}
\providecommand\lift{\opn{lift}}
\providecommand\liftofrom[2]{\lift_{#1\ot#2}}
\providecommand\id{\mathbf{I}}
\providecommand\other{\text{previous}}
\providecommand\ot{\leftarrow}
\providecommand\into\hookrightarrow
\providecommand\GS{\mc Z}
\providecommand\Z{\GS}
\providecommand\CC{\double{C}}
\providecommand\V{\mc V}
\providecommand\Y{\mc Y}
\providecommand\spec{\opn{spec}}
\providecommand\subsp{\preceq}
\providecommand\supsp{\succeq}
\providecommand\opge\ge
\providecommand\erra{\varphi}
\providecommand\ople\le
\providecommand\plusbounded{\lesssim}
\providecommand\aand{\quad\text{and}\quad}
\providecommand\sphere{\mc S}
\providecommand\vdim[1]{\opn{dim}(#1)}
\providecommand\ONE{1\kern-1.0ex\opn{I}}
\providecommand\shrink{\Delta}
\renewcommand\P{{P}}
\providecommand\nofrac[2]{#1\big/#2}
\providecommand\xpt\backslash
\providecommand\IN[1]{\Gamma_{#1}^\dag}
\providecommand\ON[1]{\Gamma_{#1}}
\providecommand\tofrom[2]{\Pi_{#1\ot\!#2}}
\providecommand\W{\mc W}
\providecommand\B{\mathcal B}
\providecommand\AGSP{{K}}
\providecommand\agsp{{K}}
\providecommand\emp[1]{\textbf{\textit{#1}}}
\providecommand\Bl{\B(\Hl)}
\providecommand\Br{\B(\Hr)}
\providecommand\PAP{\mc K}
\providecommand\bra[1]{\langle#1|}
\providecommand\shann{\opn{S}}
\providecommand\ket[1]{|#1\rangle}
\providecommand\bracket[2]{\langle#1|#2\rangle}
\newtheorem{lemma}{Lemma}
\numberwithin{lemma}{section}
\newtheorem{theorem}[lemma]{Theorem}
\newtheorem*{lemma*}{Lemma}
\newtheorem{proposition}[lemma]{Proposition}
\newtheorem{observation}[lemma]{Observation}
\newtheorem{claim}[lemma]{Claim}
\newtheorem{definition}[lemma]{Definition}
\newtheorem*{definition*}{Definition}
\newtheorem{corollary}[lemma]{Corollary}
\newtheorem{fact}[lemma]{Fact}
\newtheorem{remark}[lemma]{Remark}
\newtheorem*{corollary*}{Corollary}
\newcommand\dgoal{{\delta_{\text{\textup{goal}}}}}
\newif\ifblind
\begin{document}
\title{
Solving degenerate 2D frustration-free spin systems in sub-exponential time
}
\author{
Nilin Abrahamsen
\\
\small Simons Institute for the Theory of Computing,
\\
\small
Berkeley, CA, USA}

\maketitle
\begin{abstract}
We give an improved analysis of approximate ground space projectors (AGSPs) to obtain a sharp error reduction bound in the setting a degenerate ground space. The improved tools imply that the recently proven 2D area law directly extends to sub-exponentially degenerate ground spaces and also yields a sub-exponential-time classical algorithm to compute the ground states. This time complexity cannot be improved beyond sub-exponential, even for the special case of classical unfrustrated systems in 2D. 
\end{abstract}

\section{Introduction}



Computing the ground state energy of a spin system is known to be complete for the computational complexity class QMA in general \cite{kempe_complexity_2006} and even under restrictive assumptions on the interaction geometry, namely when restricting to interactions on the line \cite{hallgren_local_2013}. On the other hand, combining the interaction geometry of a line with the additional assumption of a \emph{spectral gap} leads to a tractable problem \cite{landau_polynomial_2015}. This fact relies on the \emph{area law} \cite{hastings_area_2007} which holds for ground states of gapped spin chains (local Hamiltonians with the geometry of a line). Area laws are the strongest possible \emph{entanglement bounds} for ground states of local Hamiltonians and have so far only been rigorously proven in the case of spin chains, where they coincide with a \emph{constant} entanglement bound \cite{hastings_area_2007}, and in the case of a 2D unfrustrated system \cite{2darea}. In the one-dimensional case the constant entanglement bound further implies an efficient representation of the state as a \emph{matrix product state} (MPS) \cite{vidal_efficient_2003}. It was further shown that an MPS representating a 1D gapped ground state can be computed efficiently \cite{landau_polynomial_2015}. Higher-dimensional analogues of matrix product states are knowns as PEPS, and more generally \emph{tensor networks}. In higher dimensions the property of having an efficient representation as a PEPS state is strictly stronger than satisfying an area law \cite{ge_area_2016}. One can conjecture that efficient PEPS representations should exist for gapped ground states--this can be viewed as a strong version of the area law. 
Still, the existence of a PEPS representation does not imply efficient algorithms, as even computing expectation values for a given 2D PEPS is \#P-complete \cite{Haferkamp_2020}, never mind the optimization problem over the family of PEPS networks.

\emph{Approximate ground space projectors} (AGSPs) are an indispensable tool for proving entanglement bounds on ground states of gapped local Hamiltonians \cite{arad_improved_2012,arad_area_2013,2darea} and for constructing polynomial-time algorithms \cite{landau_polynomial_2015,chubb_computing_2016,arad_rigorous_2017} for gapped spin chains. An AGSP is an operator which shrinks the norms of excited states by a factor $\sqrt\shrink<1$ but leaves the ground states invariant or acts as a dilation on them. 


In results using AGSPs it is often assumed that the ground state be \emph{unique} \cite{arad_improved_2012,arad_area_2013,anshu_entanglement_2020}. Indeed for unique ground states the existence of a $(\shrink,\:R)$-AGSP with $\shrink R\le1/2$ immediately implies an entanglement bound $O(\log R)$ by applying a lemma of Arad, Landau, and Vazirani (\cite{arad_improved_2012} corollary III.4). Here $R$ is the entanglement rank of the AGSP and $\shrink$ the shrinking factor. We call this implication the (non-degenerate case) \emph{off-the-shelf} bound. It reduces the task of proving an area law to that of constructing an AGSP.

Generalizing area laws and algorithms for 1D gapped Hamiltonians from the setting of a unique ground state to a ground space with \emph{degeneracy} (i.e., dimension) $D>1$ has been a focus of several works, starting with the case of a \emph{constant} degeneracy \cite{chubb_computing_2016,huang_area_2014} and later generalized further to \emph{polynomial} degeneracy \cite{arad_rigorous_2017}.  
AGSPs have been used before to prove an area law for polynomially degenerate ground spaces \cite{arad_rigorous_2017} of gapped spin chains, but no direct analogue of the off-the-shelf bound follows from existing tools and analyses. 

\section{Results}
We generalize the off-the-shelf entanglement bound of \cite{arad_improved_2012} to degenerate ground spaces (proposition \ref{mainres}). 
To obtain this generalization we prove the {optimal bound} on how the error of a $\delta$-viable space improves when applying an AGSP (lemma \ref{sharplemma}). 
The improved error reduction bound of lemma \ref{sharplemma} is necessary to prove the off-the-shelf entanglement bound in the degenerate case without assuming a strengthened parameter tradeoff for the AGSP. 


\paragraph{Application to 2D systems}
In a recent breakthrough \cite{2darea} by Anshu, Arad, and Gosset, it was shown that an area law holds for 2D frustration-free spin system satisfying a \emph{uniform (or local) gap condition} \cite{gilyen_preparing_2017}. As remarked in \cite{2darea} citing a preprint of this paper, the off-the-shelf bound proved in our paper immediately extends their result from unique to \emph{sub-exponentially} degenerate ground states.
\\\\
In another paper \cite{abrahamsen2022entanglement} the frustrated version (lemma \ref{frversion}) of our off-the-shelf bound was used to give an area law for 1D bosonic systems.

\subsection{Sub-exponential ground space computation for 2D spin systems}
We then apply our algorithm with a modified version of the AGSP from \cite{2darea}. This allows us to compute the ground states of a uniformly gapped frustration-free 2D spin system in sub-exponential time. This is the first sub-exponential algorithm for computing the ground states of gapped lattice Hamiltonians beyond the one-dimensional setting \cite{landau_polynomial_2015}.  
Assuming the randomized exponential-time typothesis sub-exponential time complexity is the best one can hope for in the present 2D setting, even in the special case of classical constraint-satisfaction problems on a 2D grid.

In appendix \ref{constMPO} we modify the AGSP of \cite{2darea} to arrive at implementable
AGSPs (one for each vertical cut) as in section \ref{algsec} That is, we modify the operator
to bound its entanglement everywhere, yielding an implementable MPO representation. 
Applying algorithm \ref{thealg} to this AGSP yields:

\begin{theorem}\label{maintheorem}
	Let $H$ be a frustration-free Hamiltonian with uniform gap $\gamma$ on the ${\wdth}\times {\hte}$ lattice with $n={\wdth}{\hte}$ qudits. Suppose $\wdth/\hte$ is at most polynomial in $n$. Let $\Dbound$ be a bound on the degeneracy. 
 Then there exists a randomized algorithm with time complexity $D^{O(1)}\exp(n_y^{
1+o(1)})$ which outputs an MPS representing a subspace $\tilde Z\subsp H$ such that $\tilde Z\approx_\delta Z$ where $\delta= 2^{-n_y}$, with probability at least
$1/2$.
\end{theorem}

Without loss of generality we may rotate the lattice such that $\wdth\ge\hte$ and therefore $\hte\le \sqrt n$. The time complexity in theorem \ref{maintheorem} is therefore sub-exponential in $n$, being bounded by
\begin{equation}\label{timecomp}\Dbound^{O(1)}\exp\big[ {n}^{\frac12+o(1)}\big].\end{equation}
The error probability in theorem \ref{maintheorem} is easily reduced by repetition. Indeed, the combined output will be the span of all outputs $\tilde\Z_i$ on which $H$ has energy $0$. 
%
%
%



We verify that the output of theorem \ref{maintheorem} can be used to compute the expectation values of local observables. In fact we may modify the algorithm with a post-processing step which prepares a list of all such expectation values on the ground states.
\begin{corollary}[Post-processing]\label{postproc}
	Let $S\subset(\{\sigma_i\}_{i=1}^{\q^2})^{\otimes n}$ be the set of Pauli observables which act nontrivially on at most $k\le \sqrt n$ spins.
The algorithm of theorem \ref{maintheorem} can be modified to output a 3-dimensional table $T$ such that, for some basis $\{\ket{z_i}\}$ for $\Z$, $|T_{ij}^{\sigma}-\bra{z_i}{\sigma}\ket{z_j}|\le\delta$ for each $\sigma\in S$ and $i,j=1,\ldots,D$ with probability at least $1/2$. The time complexity of the modified algorithm is still \eqref{timecomp}.
\end{corollary}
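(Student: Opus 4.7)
The plan is to run the algorithm of Theorem~\ref{maintheorem} with a tightened accuracy $\delta'=c\delta$ to obtain the MPS $M$ representing a subspace $\tilde\Z\approx_{\delta'}\Z$, and then compute each requested matrix element by direct tensor-network contraction on $M$. By Figure~\ref{MPSfig}, $M$ carries one physical index per column of the lattice (each of dimension $\q^{\hte}\le 2^{O(\sqrt n)}$), one degeneracy index of dimension $D\le\Dbound$, and bond dimension bounded by \eqref{timecomp}. A standard QR sweep along the chain brings $M$ into mixed canonical form in which the column tensors are left/right isometries, so that the degeneracy index labels an orthonormal basis $\ket{\tilde z_1},\ldots,\ket{\tilde z_D}$ of $\tilde\Z$; this sweep runs in time polynomial in the bond dimension, $\q^{\hte}$, and $\wdth$, which fits inside \eqref{timecomp}.

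Next I would enumerate $S$. The number of Pauli strings acting nontrivially on at most $k\le n^{5/6}$ sites is bounded by $(k{+}1)\binom{n}{k}(\q^{2}-1)^{k}=2^{\tilde O(n^{5/6})}$, within the budget. For each $\sigma\in S$ the restriction of $\sigma$ to a single column is a tensor product of single-site Paulis on at most $k$ sites of that column, so it can be applied to an MPS slice without ever materializing a dense $\q^{\hte}\times\q^{\hte}$ matrix. Precomputing the left and right environments of $M$ once and reusing them across $\sigma$, I would contract the MPS--operator--MPS sandwich only on the columns where $\sigma$ acts nontrivially, producing the $D\times D$ matrix with $(i,j)$ entry $\bra{\tilde z_i}\sigma\ket{\tilde z_j}$. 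A per-$\sigma$ cost of $(\text{bond})^{O(1)}\cdot\q^{O(\hte)}\cdot\wdth\cdot D^{2}=2^{\tilde O(n^{5/6})}$ times $|S|=2^{\tilde O(n^{5/6})}$ still fits in \eqref{timecomp}.

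To finish I would upgrade the subspace guarantee to an entrywise one: $\tilde\Z\approx_{\delta'}\Z$ gives $\|P_{\tilde\Z}-P_{\Z}\|=O(\delta')$, and a standard Davis--Kahan-type argument then produces an orthonormal basis $\{\ket{z_i}\}$ of $\Z$ with $\|\ket{\tilde z_i}-\ket{z_i}\|=O(\delta')$; since $\|\sigma\|\le 1$ this forces $|\bra{\tilde z_i}\sigma\ket{\tilde z_j}-\bra{z_i}\sigma\ket{z_j}|=O(\delta')$, and $c$ is chosen so the right-hand side is at most $\delta$. Defining $T_{ij}^{\sigma}:=\bra{\tilde z_i}\sigma\ket{\tilde z_j}$ then meets the corollary's guarantee for the basis $\{\ket{z_i}\}$. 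The step most likely to require care is the second one: the naive per-column contraction cost scales as $\q^{2\hte}$, so staying inside the subexponential budget requires exploiting both the tensor-product structure of $\sigma$ within each column and the reuse of environments along the chain; the Davis--Kahan step is routine once the contractions are in place.
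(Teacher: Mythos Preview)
Your approach matches the paper's: run Theorem~\ref{maintheorem} and then contract the resulting MPS against each $\sigma\in S$, bounding the $|S|\cdot D^{2}$ table entries within \eqref{timecomp}; your version simply spells out the orthonormalization and the perturbation (Davis--Kahan) step that the paper leaves implicit. The caveat in your last paragraph is unnecessary---since $\hte\le\sqrt n$, even the naive per-column cost $\q^{2\hte}=2^{O(\sqrt n)}$ already sits comfortably inside the $2^{\tilde O(n^{5/6})}$ budget, so no special exploitation of the tensor-product structure of $\sigma$ is needed.
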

\begin{proof}
	The modified algorithm runs the algorithm of theorem \ref{maintheorem} and then contracts the resulting MPS to compute each entry of $T$. Contracting the MPS is polynomial in the bond dimension and linear in $n$ \cite{vidal_efficient_2003,paeckel_time-evolution_2019}. Moreover, The number of entries of $T$ is $D^2\binom{n}{k}$, so we can absorb the time complexity in \eqref{timecomp}.
\end{proof}
%

\subsection{Lower bounds}
To state the strongest lower bound we should show hardness in as restrictive a special case as possible. We therefore consider the case when $H$ is a satisfiable classical 3SAT-formula and moreover the degeneracy is $\Dbound=1$, i.e., the satisfying assignment is promised to be unique. Then the local gap is $\gamma=1$, and the satisfying assignment can be found by computing the $1$-local observables using corollary \ref{postproc} to constant accuracy $\delta$. 

\begin{lemma}\label{rect3SAT}
	Let $A$ be the set of \texttt{3SAT} instances on a 2D grid and let $uA\subset A$ be the set of such instances with exactly 1 satisfying assignment. 
	\begin{itemize}
	    \item Suppose there exists a polynomial-time algorithm which given an instance from $uA$ outputs the satisfying assignment with probability 1/2. Then NP equals RP (randomized polynomial time). 
	    \item
	    Suppose there exists a $\exp(n^{o(1)})$-time randomized algorithm which given an instance from $uA$ outputs the satisfying assignment with probability 1/2. Then there exists a $\exp(n^{o(1)})$-time randomized algorithm for for \texttt{SAT} with $n$ variables.
	\end{itemize}
\end{lemma}

\begin{proof}
	\texttt{SAT} is parsimoniously reducible to \texttt{3SAT} \cite{kozen_design_2012} which itself is parsimoniously reducible to \texttt{rectilinear planar 3SAT} \cite{lichtenstein_planar_1982,knuth_problem_1992,demaine_lecture_2014} ({All reductions mentioned are polynomial-time}).  A \texttt{rectilinear planar 3SAT} instance is easily embedded in the 2D grid with 3-local constraints. 
	So there exists a parsimonious reduction $g$ which takes \texttt{SAT} instances to $A$ and \texttt{unique SAT} instances to $uA$. 

	By the Valiant-Vazirani theorem \cite{valiant_np_1985} there exists a randomized reduction $f$ from \texttt{SAT} to \texttt{unique SAT}. 
	Since $g$ preserves uniqueness of solutions $g\circ f$ gives a randomized reduction from \texttt{SAT} to the problem of computing the solution to an instance of $uA$. Since the size $n$ of the $uA$ instance is polynomial in the number of variables $n_0$ of the initial \texttt{SAT} formula, $\exp(n^{o(1)})=\exp(n_0^{o(1)})$.
\end{proof}

It follows that the running time of theorem \ref{maintheorem} and corollary \ref{postproc} cannot be improved to polynomial in $n/\gamma$ unless $NP=RP$. And it cannot be improved to $\exp(n^{o(1)})$ assuming the randomized exponential-time hypothesis.

\subsection{Open problems}


It is interesting whether the sub-exponential \emph{space} complexity of theorem \ref{maintheorem} can be improved further, possibly even so far as to be \emph{polynomial}. This is a question about the existence of efficient representations of ground states, which have so far been elusive beyond gapped spin chains, and which may present additional challenges beyond establishing area laws \cite{ge_area_2016,huang_2d_2020}.   
In a different direction it is interesting whether the sub-exponential \emph{space} complexity of theorem \ref{maintheorem} can be improved further, possibly even so far as to be \emph{polynomial}. This is a question about the existence of efficient representations of ground states, which have so far been elusive beyond gapped spin chains, and which may present additional challenges beyond establishing area laws \cite{ge_area_2016,huang_2d_2020}.   

%

\section{Basic definitions}

Let $\B(\H)$ is the space of linear operators on Hilbert space $\H$ and $\id\in\B(\H)$ the identity. Let $\ople$ denote the Loewner order on operators and write $\Z\subsp\H$ when $\Z$ is a subspace of $\H$.

Given a Hamiltonian $H\in\B(\H)$, an AGSP for $H$ is an operator $\AGSP$ which shrinks the excited states but not the vectors in the ground space $\GS\subsp\H$ of $H$. We do not directly invoke the Hamiltonian itself, as the AGSP property can be captured in terms of just the ground space $\GS$.   Given a subspace $\GS\subsp\H$ let $\P_\GS$ be the projection onto $\GS$. 
The standard definition of an AGSP is the following:
\begin{definition}[\cite{arad_improved_2012,arad_area_2013}]
	A standard $\Delta$-AGSP with target space $\Z\subsp\H$ is an operator $K\in\B(\H)$ which commutes with $\P_\Z$ and such that $K\P_\Z=\P_\Z$ and $\|K\P_{\Z^\perp}\|\le\sqrt\shrink$.
\end{definition}

In the interest of wide applicability we also use a less restrictive definition of an AGSP when stating our error-reduction bound and the subsequent off-the-shelf entanglement bound. This more general AGSP is mainly useful for the frustrated case; in particular it generalizes the \emph{spectral} AGSP of \cite{arad_rigorous_2017} as well as standard AGSPs.
\begin{definition}\label{AGSPdef}
	A general $\shrink$-AGSP with target space $\Z\subsp\H$ is an operator $\AGSP\in\B(\H)$ which commutes with $\P_\Z$ and satisfies
	\begin{enumerate}\item
		$\P_\Z \AGSP^\dag\AGSP\P_\Z\opge\P_\Z$,\qquad i.e., $\AGSP$ is a dilation on $\GS$\label{dilation}
	\item$\|\AGSP\P_{\Z^\perp}\|\le\sqrt\shrink$.\label{shrinkitem}
	\end{enumerate}
\end{definition}
A \emph{spectral} AGSP \cite{arad_rigorous_2017} for a Hamiltonian $H$ corresponds to definition \ref{AGSPdef} with the additional requirement that $\AGSP\opge0$, and that $\AGSP$ and $H$ be simultaneously diagonalizable. 
The condition that $\AGSP$ commute with $\P_\Z$ is equivalent with requiring that $\Z$ and $\Z^\perp$ be closed under $\AGSP$, i.e., following two conditions\footnote{  Indeed, \eqref{toself} imply $\AGSP\P_\GS=\P_\GS\AGSP\P_\GS=\P_\GS\AGSP-\P_\GS\AGSP\P_{\GS^\perp}=\P_\GS\AGSP$, where the last equality is because $\AGSP$ sends $\GS^\perp$ to itself. In the special case where $\AGSP$ is Hermitian it suffices to check one of the implications \eqref{toself}.   } from \cite{arad_improved_2012,arad_area_2013}:
\begin{equation}\label{toself}\ket z\in\GS\implies\AGSP\ket z\in\GS\aand\ket y\in\GS^\perp\implies\AGSP\ket y\in\GS^\perp.\end{equation}


A $(\shrink,R)$-AGSP on a biparitite Hilbert space $\H_L\otimes\H_R$ is a $\shrink$-AGSP $\AGSP\in\Bl\otimes\Br$ with entanglement rank at most $R$, i.e., it is the sum of $R$ tensor products. It will often be useful to consider just the span of the left tensor factors arising in such a decomposition using the following definition:

\begin{definition}
\label{bipagsp}
For a bipartite $\H=\H_L\otimes\H_R$ we say that an operator subspace $\mc K\subsp\B(\H_L)$ is a $\shrink$-AGSP with target space $\Z\subsp\H$ if there exists a $\shrink$-AGSP $K\in\mc K\otimes\B(\H_R)$ with target space $\Z$.
\end{definition}

A $(\shrink,R)$-AGSP on a bipartite Hilbert space gives rise to an AGSP $\PAP\subsp\B(\H_L)$ in the sense of definition \ref{bipagsp} with $\dim\PAP\le R$.

\paragraph{Overlaps and errors}
Let $\sphere(\H)$ be the sphere of unit vectors in $\H$. Let $\Z,\V\subsp\H$ be subspaces.
\begin{definition}
			 $\V$ is $\mu$-overlapping onto $\Z$ (written $\V\supsp_\mu\Z$) if $\|P_\V\ket z\|^2\ge\mu$ for all $\ket z\in\sphere(\Z)$. $\V$ is $\delta$-viable for $\Z$ if $\|P_{\V^\perp}\ket z\|^2\le\delta$ for all $\ket z\in\sphere(\Z)$. 
\end{definition}
$\V$ is $\delta$-viable for $\Z$ iff it is $\mu$-overlapping onto $\Z$ with $\mu=1-\delta$. \emph{The} overlap of $\V$ onto $\GS$ is $\mu=\min_{\ket z\in\sphere(\Z)}\|P_\V\ket z\|^2$ and \emph{the} error of $\V$ onto $\Z$ is $\delta=\max_{\ket z\in\sphere(\Z)}\|P_{\V^\perp}\ket z\|^2$. Two subspaces are $\delta$-close ($\approx_\delta$) if each is $\delta$-viable for the other.

 We say that $\V\subsp\H$ \emph{covers} $\GS\subsp\H$ if $\P_\GS(\V)=\GS$. Equivalently the range of $\P_\GS\P_\V$ is $\Z$, or $P_\Z\P_\V\P_\Z\ge\mu\P_\Z$ for some $\mu>0$. That is, $\V$ covers $\Z$ if its overlap onto $\Z$ is $\mu>0$.
\begin{definition}[Bipartite case]
	Given a target subspace of a bipartite space $\GS\subsp\H_L\otimes\H_R$, a subspace $\mc V\subsp\H_L$ of the left tensor factor is said to be $\delta$-viable for $\GS$ iff $\mc V\otimes\H_R$ is $\delta$-viable for $\GS$.\end{definition}
	
	Typically in the literature the word $\delta$-viable refers exclusively to this bipartite case. But our terminologies are in fact equivalent as one can take $\H_R=\CC$.

\paragraph{Entanglement}
The \emph{von Neumann} entropy $\shann(\rho)$ of a density matrix $\rho$ is the Shannon entropy $\sum_{i}\lambda_i\log(1/\lambda_i)$ of its eigenvalues. For a pure state $\ket\psi\in\H_L\otimes\H_R$ in a bipartite space its entanglement entropy is $\shann(\rho^\psi_L)$ where $\rho^\psi_L=\tr_R(\ket\psi\bra\psi)$ is the reduced density matrix on $\H_L$. This quantity is unchanged if switching the roles of $\H_L$ and $\H_R$.

\section{Improved technical tools}

The entanglement bound and simple algorithm given in this paper both rely on the precise analysis of how overlap is improved when applying an AGSP. This analysis is straightforward when the target is a single vector, but the exact bound was not established previously in the degenerate setting.


\subsection{Error reduction bound}
Consider two subspaces $\GS,\V\subsp\H$ such that $\V$ {covers} $\GS$. 
 Let $\mu>0$ be the overlap of $\V$ onto $\GS$ and let $\delta=1-\mu$, then define the \emp{error ratio} $\erra$ of $\V$ onto $\GS$ as $\varphi=\delta/\mu<\infty$. 
 
 Denoting the largest \emph{principal angle} \cite{galantai_jordans_2006,ben-israel_geometry_1967} between $\GS$ and $\P_\V(\GS)\subsp\V$ as $\theta$ one has that $\mu=\cos^2\theta$ and $\delta=\sin^2\theta$, so we can equivalently write the error ratio of $\V$ onto $\Z$ as
\[\erra=\delta/\mu=\tan^2\theta.\]
\begin{lemma}\label{sharplemma}
Let $\AGSP$ be a general $\shrink$-AGSP for $\GS\subsp\H$, and suppose $\V\subsp\H$ covers $\GS$ with error ratio $\erra$. then $\V':=\AGSP\V=\{\AGSP\ket v:\ket v\in\V\}$ covers $\GS$ and the error ratio $\erra'$ of $\V'$ onto $\GS$ satisfies
\[\erra'\le\shrink\cdot\erra.\]
\end{lemma}
This bound is clearly sharp.\footnote{Consider the $\shrink$-AGSP $\AGSP=\ket0\bra0+\sqrt\shrink\ket1\bra1$ on $\CC^2$ and subspaces $\Z,\V\subsp\CC^2$ spanned by $\ket z=\ket0$ and $\ket v=\frac1{\sqrt{1+\erra}}(\ket 0+\sqrt\erra\ket 1)$.}
Because $\delta'=\frac{\erra'}{1+\erra'}\le\erra'$, lemma \ref{sharplemma} implies:
\begin{corollary}\label{slightly}
If $\V$ is $\delta$-viable for $\GS$ with $\delta=1-\mu<1$, then $\AGSP\V$ is $\delta'$-viable for $\GS$ with error $\delta'\le \shrink\delta/\mu$.
\end{corollary}

The best previous error reduction bound for the general degenerate-case AGSPs (\cite{arad_rigorous_2017} lemma 6) bounded the post-AGSP viability error by
\begin{equation}\label{thelitbound}\delta_{\other}'=\shrink/\mu^2.\end{equation}
The post-AGSP error bound $\delta'$ in corollary \ref{slightly} improves on \eqref{thelitbound} by a factor $\mu\cdot\delta$, 
which is particularly significant when starting in either the small-overlap $\mu\ll1$ or small-error regime $\delta\ll1$. 

The subtlety in proving lemma \ref{sharplemma} comes from the following: While an AGSP is defined in terms of an orthogonal decomposition with respect to the target space $\Z$, the overlap is conversely defined in terms of orthogonal decompositions with respect to the \emph{covering} subspace $\V$. To prove lemma \ref{sharplemma} we replace $\V$ with a subspace $\Y\subsp\V$ and establish a symmetry between $\Y$ and $\Z$. 
 
 In appendix \ref{liftingversion} we also include an alternative proof of lemma \ref{sharplemma} which is more similar in structure to the analysis in \cite{arad_rigorous_2017} lemma 6. In this case we obtain the strengthened bound by improving the `lifting' lemmas (1 and 2) of \cite{arad_rigorous_2017} to have quadratically better dependence on the overlap $\mu$.

%
%


\subsection{Off-the-shelf bound for degenerate ground spaces}
We combine our error reduction bound \ref{sharplemma} with the \emph{bootstrap} \cite{arad_improved_2012} to obtain the off-the-shelf entanglement bound in the degenerate setting.

\begin{proposition}\label{mainres}
	Suppose there exists a general $(\shrink,R)$-AGSP $\AGSP\in\B(\Hl\otimes\Hr)$ such that
	\[R\shrink\le1/2.\]
	Let $\GS$ be the target space of $\AGSP$ and $D=\vdim\GS$ its degeneracy. Then the maximum entanglement entropy of any state $\ket\psi\in\GS$ satisfies the bound
	\[\max_{\ket\psi\in\sphere(\GS)}\shann(\rho^\psi_\lft)=1.01\cdot\log D+O(\log R),\]
	where $\sphere(\GS)$ is the set of unit vectors in $\GS$ and $\shann(\rho^\psi_\lft)$ is the entanglement entropy of $\ket\psi$ between subsystems $\Hl$ and $\Hr$.
\end{proposition}

Proposition \ref{mainres} is proved in section \ref{frsection}.
In the case of a frustrated Hamiltonian the typical AGSP contruction involves spectral \emph{truncations} of parts of the Hamiltonian, incurring an error in the target space of the AGSP. We therefore also prove a version (lemma \ref{frversion}) of proposition \ref{mainres} which is applicable to the frustrated case by allowing the target space to be approximate.

\paragraph{Possible improvements}
Given our formulation of the entanglement bound in proposition \ref{mainres} as a uniform bound over all vectors in $\sphere(\Z)$ it is clear that the bound must include a term $\log D$ corresponding to the degeneracy; consider for example the zero Hamiltonian. On the other hand, since the zero Hamiltonian does not enforce entanglement on its ground states, one may wish to avoid the $\log D$ term at the cost of the bound holding in a weaker sense, say, for a basis. We conjecture that under the conditions of proposition \ref{mainres}, $\Z$ can be written as the span of $D$ vectors $\ket{\psi_i}$ satisfying the entanglement bound $\max_{i=1,\ldots,D}\shann(\rho^\psi_\lft)=O(\log R)$.\footnote{The author thanks Anurag Anshu and David Gosset for a discussion about this non-uniform statement} Even if such an improved entanglement bound holds it seems likely that the uniform bound is the correct notion for algorithms, as it bounds the dimension of a viable space. 


\section{Proof of the off-the-shelf entanglement bound} 

We begin by proving the error reduction bound of lemma \ref{sharplemma}.
\subsection{Proof of improved error reduction bound}

Given Hilbert space $\H$ and subspace $\V\subsp\H$, let $\ON\V:\H\to\V$ denote the orthogonal projection onto $\V$ \emph{ when viewed as a surjective map $\H\to\V$}. Given another subspace $\Z\subsp\H$ we define the \emph{transition map $\tofrom\V\Z$ from $\Z$ to $\V$} as the restriction of $\ON\V$ to domain $\Z$. Formally:
\begin{definition}
Given a subspace $\V\subsp\H$,  let $\ON\V:\H\to\V$ be the adjoint of the inclusion map $\IN\V:\V\into\H$.
The transition map from $\Z$ to $\V$ is $\tofrom\V\Z=\ON\V\IN\Z$.
\end{definition}
The overlap $\mu$ of $\V$ onto $\Z$ equals $\min\spec(\tofrom\Z\V \tofrom\V\Z)$, where $\spec$ is the spectrum.  
The \emph{principal angles between $\Z$ and $\V$} are defined \cite{galantai_jordans_2006,ben-israel_geometry_1967} as the $\arccos$ of the singular values of $\tofrom\V\Z$.
This definition illustrates a symmetry between two subspaces. 


\begin{observation}\label{same}
Let $\V,\Z\subsp\H$ be two subspaces such that each covers the other. Then the overlap of $\V$ onto $\Z$ equals the overlap of $\Z$ onto $\V$.
\end{observation}
\begin{proof}
Let $M=\tofrom\Z\V$. Then $\spec(M M^\dag)\xpt\{0\}=\spec(M^\dag M)\xpt\{0\}$ (Jacobson's lemma). The assumed nonzero overlaps then imply $\spec(M^\dag M)=\spec(MM^\dag)$ and in particular the overlaps agree $\min\spec(MM^\dag)=\min\spec(M^\dag M)$.\end{proof}

If $\V_1\subsp_\mu\V_2$ and $\V_1\supsp_\mu\V_2$ then we say that $\V_1$ and $\V_2$ are mutually $\mu$-overlapping and write $\V_1\parallel_\mu\V_2$.
\begin{corollary}[Symmetry lemma]\label{newswap}
For $\V_1,\V_2\subsp\H$ and $\mu>0$,
\[\V_1\subsp_\mu\V_2\aand\text{$\V_1$ covers $\V_2$}\iiff\V_1\parallel_\mu\V_2.\]
\end{corollary}

\begin{lemma}\label{obv}
For subspaces $\Z,\Y\subsp\H$ and $\mu>0$,
\[\V\supsp_\mu\Z\iiff\P_\V(\Z)\parallel_\mu\Z.\]
\end{lemma}
\begin{proof}Let $\Y=\P_\V(\Z)$.
	($\Leftarrow$) is clear since $\Y\subsp\V$. ($\Rightarrow$): Since $\Y\subsp\V$, $\P_\Y(\Z)=\P_\Y\P_\V(\Z)=\P_\Y(\Y)=\Y$ and thus $\Z$ covers $\Y$. On the other hand $\P_\Z\P_\Y\P_\Z=\P_\Z\P_\V\P_\Z$, so $\V\supsp_\mu\Z$ implies that $\Y\supsp_\mu\Z$. So $\Y\parallel_\mu\Z$ by the symmetry lemma.
\end{proof}


%

\begin{proof}[Proof of lemma \ref{sharplemma}]
Let $\Y=\P_\V\Z$ be the projection of $\Z$ onto the covering space $\V$, and let $\Y'=\AGSP\Y$. By lemma \ref{obv} we have that $\Y\parallel_\mu\Z$ where $\mu=\frac1{1+\varphi}$. Since $\Y'\subsp\AGSP\V$ it suffices to show that $\Y'\parallel_{\mu'}\Z$ where $\mu'=\frac1{1+\shrink\varphi}$. We prove this using the symmetry lemma:

First, $\Y'$ covers $\Z$ because
$P_\Z(\Y')=P_\Z(\AGSP\P_\V\Z)=\AGSP (P_\Z\P_\V\P_\Z)\Z=\AGSP\Z=\Z$. Here we have commuted $\AGSP$ past $\P_\Z$ and used the fact that $\P_\Z\P_\V\P_\Z\opge\mu\P_\Z$ for $\mu>0$ since $\V$ covers $\Z$.

We now compute the overlap of $\Z$ onto $\Y'$. Given $\ket{y'}\in\Y'$ write $\ket{y'}=\AGSP\ket y$ for some $\ket y\in\Y$. Since $\Y\parallel_\mu\Z$ we have 
${\bra{y}\P_{\Z^\perp}\ket{y}}\le\erra{\bra{y}\P_\Z\ket{y}}.$
Apply the AGSP property $\|\AGSP\P_{\Z^\perp}\|\le\sqrt\shrink$ and the dilation property on $\Z$:
\begin{equation}\|\AGSP\P_{\Z^\perp}\ket{y}\|\le\sqrt\shrink\|\P_{\Z^\perp}\ket{y}\|\le\sqrt{\shrink\erra}\|\P_\Z\ket{y}\|\le\sqrt{\shrink\erra}\|\AGSP\P_\Z\ket y\|.\end{equation}
 Recognizing the LHS as $\|\P_{\Z^\perp}\ket{y'}\|$ and the RHS as $\sqrt{\shrink\erra}\|\P_\Z\ket{y'}\|$ establishes that ${\bra{y'}\P_{\Z^\perp}\ket{y'}}\le\shrink\erra{\bra{y'}\P_\Z\ket{y'}}$;
  thus, the error ratio of $\Z$ onto $\Y'$ is at most $\varphi'=\shrink\varphi$. Since we showed that $\Y'$ covers $\Z$, $\varphi'$ is a \emph{mutual} error ratio by the symmetry lemma.
\end{proof}

\subsection{Applying the error reduction bound}\label{EB}

The {bootstrapping argument} \cite{arad_improved_2012,arad_area_2013,arad_rigorous_2017} proves the existence of a subspace $\V\subsp\Hl$ with small dimension and non-negligible overlap with the target space $\GS\subsp\Hl\otimes\Hr$. The argument combines a method for reducing the entanglement of a subspace with one for increasing overlap with the target space (i.e., an AGSP) in such a way that $\vdim\V$ does not increase when concatenating the operations. 

To offset the dimension growth from the AGSP, the entanglement reduction needs to decrease the entanglement by an factor $R$, which means decreasing the overlap by a factor $\Theta(R)$ using the dimension reduction procedure of \cite{arad_rigorous_2017} (appendix \ref{dimred}). One therefore has to apply the $(\shrink,R)$-AGSP in the low-overlap regime $\mu=c/R$. If we used the error bound $\delta'=\shrink/\mu^2$ of \cite{arad_rigorous_2017} then we would need $\shrink<\mu^2=(c/R)^{2}$ to have any bound on the post-AGSP error, hence requiring a bound of the form $R^2\shrink<c^2$ on the parameter tradeoff for the AGSP. In contrast, lemma \ref{sharplemma} weakens this requirement to $\shrink=\mu=cR$. More precisely we will use:
\begin{corollary}\label{sharpapplication}
	Let $\AGSP$ be a $\shrink$-AGSP with target space $\GS\subsp\H$, and suppose $\V\subsp\H$ $\mu$-overlaps onto $\GS$ with $\mu\ge\shrink$. Then $\V'=\AGSP(\V)$ has overlap $\mu'=1/2$ onto $\GS$.
\end{corollary}
\begin{proof}
	$\V$ has error ratio $\erra=\frac{1-\mu}\mu\le\frac1\mu$, so $\V'$ has error ratio $\erra'\le\shrink/\mu\le1$ by lemma \ref{sharplemma}. This corresponds to overlap $\mu'=\frac1{\erra'+1}\ge1/2$.
\end{proof}

The following lemma is proven following the overall argument of \cite{arad_rigorous_2017} proposition 2 and combining it with the sharp error reduction bound in the form of corollary \ref{sharpapplication} to change the condition from a bound on $R^C\shrink$ to one on $R\shrink$. In the following $x\lesssim y$ means $x=O(y\vee1)$ where $\vee$ denotes the maximum.
\begin{lemma}\label{start}
	Let $\GS\subsp\Hl\otimes\Hr$ be a subspace with degeneracy $\vdim\GS=D$. If there exists a $(\shrink,R)$-AGSP $\AGSP\in\B(\Hl)\otimes\B(\Hr)$ with target space $\GS$ and parameters such that
	\begin{equation}\shrink\cdot R\le1/32,\end{equation}
then there exists a left $\frac1{32R}$-overlapping space $\V\subsp\Hl$ onto $\GS$ such that $\vdim\V\plusbounded D\log R$. 
It follows that there exists $\V''$ of dimension $\vdim{\V''}\lesssim DR^2\log R$ which is left $\shrink$-viable for $\GS$.

\end{lemma}
\begin{proof}
	Let $\V$ be a left $\nu=\frac1{32R}$-overlapping space onto $\GS$ whose dimension $V$ is minimal with respect to this property. 
Let $\PAP\subsp\B(\H_L)$ be the $\shrink$-AGSP of dimension $R$ associated to the $(\shrink,R)$-AGSP $\AGSP$, and let $\V'=\PAP\V$ so that $V'=\vdim{\V'}\le RV$. $\shrink\le\nu$ by assumption \eqref{shrinkmu}, so corollary \ref{sharpapplication} yields that $\V'$ is $1/2$-overlapping onto $\GS$. 

By corollary \ref{probmeth} there exists $\Y'\subsp\V'$ which is left $\nu=\frac1{32R}$-overlapping onto $\GS$ and has dimension at most $V/2+O(D\log R\vee\log V)$ since $8V'\cdot\frac{1/(32R)}{1/2}\le V/2$. By minimality of $\V$ we have that $V\le V/2+O(D\log R\vee\log V)$, and rearranging yields the result about $\V$.

The last remark follows by taking $\V''=\PAP^2\V=\PAP\V'$. Then $\V''$ covers $\GS$ with error ratio $\erra''\le\shrink$ by lemma \ref{sharplemma} since $\V'$ has $\erra'=1$, and this upper-bounds the viability error.
\end{proof}

\begin{corollary}\label{existviable}

	Let $\GS\subsp\Hl\otimes\Hr$ be a subspace with degeneracy $\vdim\GS=D$. If there exists a $(\shrink,R)$-AGSP $\AGSP\in\B(\Hl)\otimes\B(\Hr)$ with target space $\GS$ and parameters such that
	\begin{equation}\shrink\cdot R\le1/2,\label{shrinkmu}\end{equation}
then for any $\alpha>0$ there exists a $\alpha$-viable $\V\subsp\H_{[1,\wdth]}$ with $\vdim\V\lesssim\alpha^{-1}{D R^{O(1)}}$.
\end{corollary}
\begin{proof}
	Applying lemma \ref{start} to $\AGSP^5$ there exists a $\mu=\frac1{32R^5}$-overlapping subspace $\V_0$ with dimension $O(\Dbound\log R)$. Let $\V=\AGSP^p\V_0$ where $p={\lceil\log_\Delta(\alpha\mu)\rceil}$. By lemma \ref{sharplemma} the viability error of $\AGSP^p\V_0$ is at most $\Delta^p/\mu\le\alpha$.
	We bound the dimension using $p<1+\log_{R}(\frac1{\alpha\mu})=6+\log_R(32/\alpha)$ which implies $\vdim\V\lesssim\frac{32}{\alpha}R^6\Dbound\log R$.
\end{proof}

\subsection{Subspace overlap $\to$ entanglement of vectors}

The following lemma relates the entanglement of individual ground states to $\delta$-viability. 
\begin{lemma}\label{tailbound}
	Let $\GS\subsp\Hl\otimes\Hr$ and suppose there exists a $\delta$-viable space $\mc V\subset\Hl$ of dimension $V$ for $\GS$. Pick any state $\ket\psi\in\sphere(\GS)$ and write the Schmidt decomposition $\sum_i\sqrt{\lambda_i}\ket{x_i}\ket{y_x}\in\sphere(\GS)$ with non-increasing coefficients. Then we have the tail bound
	\[\sum_{i=V+1}^{\vdim\Hl}\lambda_i\le\sqrt\delta.\]
\end{lemma}
\begin{proof}
	Let $\ket\phi\in\sphere(\GS)$ such that $\bracket{\psi}{\phi}^2\ge1-\delta$, and let $\rho_\psi$ and $\rho_\phi$ be the reduced density matrices on $\Hl$ so that $\lambda_i=\lambda_i^\psi$ are the eigenvalues of $\rho_\psi$. Then, since the trace distance contracts under the partial trace:
	\[\frac12\|\rho_\psi-\rho_\phi\|_1\le\frac12\big\|\ket\psi\bra\psi-\ket\phi\bra\phi\big\|_1=\sqrt{1-\bracket{\psi}{\phi}^2}\le\sqrt\delta,\]
	Let $d\rho=\rho_\psi-\rho_\phi$ and call its non-increasing eigenvalues (not all positive) $\lambda_i^{d\rho}$ and let $\lambda^\phi_i$ be the non-increasing eigenvalues of $\phi$. For $V+1\le i\le\vdim\Hl$, Weyl's inequalities imply $\lambda_{i}^\psi\le\lambda^\phi_{V+1}+\lambda_{i-V}^{d\rho}=\lambda_{i-V}^{d\rho}$.
	Thus $\sum_{i>V}\lambda_{i}\le\sum_j(\lambda_j^{d\rho})_+=\frac12\|d\rho\|_1\le\sqrt\delta$ where $(x)_+=x\vee0$ is the positive part and the middle equality is because $\opn{tr}(d\rho)=0$.
\end{proof}

\subsection{Proof of proposition \ref{mainres}}\label{frsection}

In the case of a frustrated Hamiltonian the AGSP contruction involves a spectral \emph{truncation} of the Hamiltonian on either side of a cut, incurring an error in the target space of the AGSP. We first prove a version of proposition \ref{mainres} which is applicable to the frustrated case by allowing the target space to be approximate. We then specialize to the case of an exact target space to obtain proposition \ref{mainres}.


\begin{lemma}\label{frversion}
	Let $\GS$ with degeneracy $\vdim\GS=D$ be a subspace of bipartite space $\H=\Hl\otimes\Hr$. Let $\tilde\GS_1,\tilde\GS_2,\ldots\subsp\H$ be a sequence of subspaces such that $\tilde\GS_n\approx_{\delta_n}\GS$ where $\delta_1,\delta_2,\ldots$ is a sequence such that $\sum_{n=0}^\infty n\sqrt{\delta_n}=O(1)$.

Let $R\shrink\le1/2$ and suppose there exists a sequence $\AGSP_1,\AGSP_2,\ldots$ such that $\AGSP_n$ is an $(\shrink^n,R^n)$-AGSP for target space $\tilde\Z_n$. Then,
\begin{equation}\max_{\ket\psi\in\sphere(\GS)}\shann(\rho^\psi_\lft)\le(1.01+c_\delta)\log D+O(\log R)\quad\text{where}\quad c_\delta=\sum_{n=1}^\infty\sqrt{\delta_n}.\label{simplified}\end{equation}
\end{lemma}
\begin{proof}
For any $m=5,6\ldots$ we show that $S(\rho^\psi_{\Hl})$ is bounded by
\begin{equation}\label{eqm}(1+\epsilon_m+c_\delta)\log D+O(m\log R),\quad\text{where}\quad\epsilon_m=\frac{\shrink^{m/2}}{1-\shrink^{1/2}}.\end{equation}
\eqref{simplified} then follows by taking $m=17$ since that and $\shrink\le1/2$ yield $\epsilon_n\le0.01$.

Applying lemma \ref{start} to $\PAP_n$ yields a left $\shrink^n$-viable space for $\tilde\GS$ for each $n\ge m$ since $R^n\shrink^n\le1/32$. The lemma implies that $\vdim{\V_n}\lesssim DR^{2n}\log(R^n)$ and hence $\vdim{\V_n}\le CDR^{3n}$ for a constant $C>0$.
	$\V_n$ is $(\shrink^{n/2}+\delta^{n/2})^2$-viable for $\GS$ by the proof of \cite{arad_rigorous_2017} lemma 3.
By lemma \ref{tailbound} the Schmidt coefficients of any state $\ket\psi\in\sphere(\GS)$ satisfy $\sum_{i>C DR^{3n}}\lambda_i\le\shrink^{\frac{n}2}+\sqrt{\delta_n}$ for each $n\ge5$. 

Let $I_0=\{1,2,\ldots,CD\cdot R^{3m}\}$, $I_1,\ldots,I_{m-1}=\emptyset$, and $I_n=\NN\cap(CD\cdot R^{3n},CD\cdot R^{3(n+1)}]$ for $n\ge m$. By the standard decomposition \cite{arad_improved_2012} of the Shannon entropy described in lemma \ref{shannonentropy} (appendix \ref{thestandarddec}),
\begin{align*}\shann(\Lambda_i)&\le\log(C D R^{3m})+\sum_{n=m}^\infty(\shrink^{n/2}+\sqrt{\delta_n})\log(CDR^{3n+3})+\sum_{n=m}^\infty h(\shrink^{n/2}+\sqrt{\delta_n})\\&=(1+\epsilon_m)\log D+O(m\log R)+\sum_{n=m}^\infty h(\shrink^{n/2}+\sqrt{\delta_n}).\end{align*}
We finalize by bounding the rightmost sum. Since $h$ is increasing on $[0,1/e]$, we can bound $h(\shrink^{n/2}+\sqrt{\delta_n})$ by $h(2^{-\frac n2}+\sqrt{\delta_n})$. This in turn is bounded by
\[h(2^{-\frac n2}+\sqrt{\delta_n})\le(2^{-\frac n2}+\sqrt{\delta_n})\log(2^{\frac n2})\le h(2^{-\frac n2})+ n\sqrt{\delta_n}.\]
So $\sum_n h(\shrink^{n/2}+\sqrt{\delta_n})=O(1)$. This establishes \eqref{eqm}.
\end{proof}
The coefficient $1.01$ in lemma \ref{frversion} and proposition \ref{mainres} can be replaced by $1+\epsilon$ for any fixed $\epsilon>0$ by taking $m\propto\log(1/\epsilon)$. The implicit constant of $O(\log R)$ then depends logarithmically on $1/\epsilon$.
\begin{proof}[Proof of proposition \ref{mainres}]
Given AGSP $\AGSP$ with $R\shrink\le1/2$ apply lemma \ref{frversion} to the sequence of AGSPs $\AGSP_n=\AGSP^n$, each with the exact target space $\tilde\GS_n=\GS$ such that we can take $\delta_n=0$. \end{proof}

\section{Simple algorithm given implementable AGSP}

\label{algsec}
\newcommand\algname{\texttt{Groundstates }}

Consider a multipartite Hilbert space $\H=\H_1\otimes\cdots\otimes\H_\wdth$ and a sequence of $\Delta$-AGSP $\AGSP_1,\ldots,\AGSP_\wdth\in\B(\H)$ given as \emph{matrix product operators} (MPOs) with bond dimension $R_{\opn{max}}$. We write $\H_{[1,i]}=\H_1\otimes\cdots\otimes\H_i$. Assume furthermore that for each i, $\AGSP_i$ satisfies a stronger entanglement bound across the i\textsuperscript{th} cut $\H_{[1,i]}|\H_{[i+1,nx]}$, namely the bond dimension of its i\textsuperscript{th} bond is bounded by $R$ such that $R\Delta\le1/2$. If the bond dimension $R_{\opn{max}}$ of the MPO satisfies an appropriate bound, say subexponential, then we call $\AGSP$ an \emph{implementable} AGSP. When applying the algorithm in the 2D case each $\H_i$ will correspond to a column of spins. 

Let $\dd=\max\{\vdim{\H_1},\ldots,\vdim{\H_\wdth}\}$ and let $\Dbound$ be an upper bound on the degeneracy $\vdim\Z$. For each $i=1,\ldots,w$ let $\PAP_{[1,i]}\subsp\B(\H_{[1,i]})$ be the operator subspace encoded by the left part of the MPO for $\AGSP_i$ where the cut bond is left open. Then $\vdim{\PAP_{[1,i]}}\le R$, and $\PAP_{[1,i]}$ is a $\Delta$-AGSP in the sense of definition \ref{bipagsp}. 

Applying our algorithm to the 2D case makes it especially important that the complexity is polynomial in the entanglement rank; this was less essential in the 1D case where the entanglement is constant, and indeed the first algorithms \cite{landau_polynomial_2015} were exponential in the entanglement rank due to the enumeration over \emph{boundary contractions}. To achieve the polynomial dependence on entanglement rank we use the random \emph{sampling} method of \cite{arad_rigorous_2017}.

The algorithm keeps a $\delta$-viable subspace $\Y_{[1,i]}\subsp\H_{[1,i]}=\H_1\otimes\cdots\otimes\H_i$ for the leftmost $i$ sites, similarly to the early algorithm \cite{landau_polynomial_2015} for spin chains. In an iteration it extends with all of the next site $\H_{i+1}$, samples a subspace as in \cite{arad_rigorous_2017}, and applies the left half of the AGSP. Given a viable space $\V$ and an AGSP $\PAP$ as in definition \ref{bipagsp} this means replacing $\V$ with $\V'=\PAP\V=\{L\ket v:L\in\PAP,\ket v\in\V\}$. As with existing algorithms for spin chains \cite{landau_polynomial_2015,chubb_computing_2016,arad_rigorous_2017}, trimming operations are interspersed; we define the trimming procedure $\Y\mapsto\Trim_\varepsilon(\Y)$ in a way that allows a simple self-contained analysis (section \ref{thetrimsec}).

	\begin{algorithm}[h]
	\caption{}
	\label{thealg}
	\KwIn{$\Delta$-AGSPs $\AGSP_1,\ldots,\AGSP_{\wdth}$ given as MPO. Parameters $V\in\NN,\varepsilon,\delta>0$}
	\vspace{1em}
	Set $\Y_{[]}=\CC$\\
	\For{$i=1,\ldots,\wdth$}{
		Sample $\V_{[1,i]}\subsp\Y_{[1,i-1]}\otimes\H_i$ with $\vdim\V_{[1,i]}=V$.
		\\Set $\Y_{[1,i]}=\Trim_\varepsilon(\PAP_{[1,i]}\V_{[1,i]})$.
	}
	\vspace{1em}
	\KwOut{Let $\tilde H=\id-\AGSP^\dag\AGSP$ and $\Y=\Y_{[1,\wdth]}$, and output $\tilde\Z$, the combined eigenspaces of $\tilde H|_{\Y}$ corresponding to eigenvalues $\le\delta$ for $\tilde H$.}
\end{algorithm}
In the last line of algorithm \ref{thealg} we use the notation $A|_\Y=\Gamma A\Gamma^\dag$ where $\Gamma:\H\to\Y$ is the (surjective) projection onto $\Y\subsp\H$ and $\Gamma^\dag:\Y\to\H$ the inclusion map. This line selects the states which are approximately preserved by $\AGSP$, i.e., the target space of the AGSP. 


\begin{proposition}\label{genericalgcor}
	Given an error parameter $0<\dgoal<1$ suppose $R\Delta\le\frac{\dgoal/\dd}{64}$. Let $D$ be the degeneracy of the target space $\Z$ of $\AGSP$. Then there exists a choice of parameters $V,\varepsilon,\delta$ such that with probability at least $1/2$ the output $\tilde Z$ of algorithm \ref{thealg} satisfies $\tilde\Z\approx_{\dgoal}\Z$ and such that the time complexity (and bond dimension of the output) is polynomial in $\Dbound\dd R_{\opn{max}}\wdth/\dgoal$.
\end{proposition}

\subsection{Algorithmic complexity reduction}
\label{thetrimsec}

All existing efficient algorithms for spin chains rely in an essential way on \emph{bond trimming} of matrix product states. We define a trimming procedure which allows for a simple self-contained analysis in the degenerate case. This definition coincides with that of \cite{arad_rigorous_2017} in the {bipartite} (as opposed to multipartite) case: 
\begin{definition}[Bipartite case]
	Given $\Y\subsp\H_{AB}$ and $\varepsilon>0$ introduce the projection $\P_A=\one_{[\varepsilon,\infty)}(\rho^\Y_A)$ where $\rho^\Y_A=\tr_B(\P_\Y)$ is the reduced density matrix and $\one$ denotes an indicator function. Then $\trim_\varepsilon^A(\Y)$ is the image $[\P_A\otimes\id_B](\Y)$.
		\end{definition}
		The trimmed subspace is contained in $\V\otimes\H_B$ where $\V=\P_A(\H_A)$, and Markov's inequality gives the bound \[\vdim\V=\opn{rank}\P_A\le\tr(\rho^\Y_A)/\varepsilon=\vdim\Y/\varepsilon,\]
To define the trimming of $\Y\subsp\H$ in a \emph{multipartite} space $\H=\H_1\otimes\cdots\otimes \H_\wdth$ we simply iterate the bipartite version:
		\begin{definition}
				Given a subspace $\Y\subsp\H_1\otimes\cdots\otimes\H_{j}$, define\[\Trim_\varepsilon(\Y)=\trim_\varepsilon^{1}\circ\trim_\varepsilon^{1,2}\circ\cdots\circ\trim_\varepsilon^{[1,j-1]}(\Y).\]
\end{definition}

\subsection{Analysis of simple trimming procedure}
Since our trimming procedure is just an iteration of the bipartite case its analysis reduces to analyzing the bipartite trimming. We consider a tripartite $\H_{ABC}$ because the subspace $\Y\subsp\H_{AB}$ to be subjected to bipartite trimming is itself viable for a target space $\Z$ on an extended space $\H_{(AB)C}$.
\begin{lemma}\label{bipartitetrimlemma}
	Let $\Z\subsp\H_{ABC}$ and let $\Y\subsp\H_{AB}$ be $\delta$-viable for $\Z$. If there exists $\V\subsp\H_A$ with $\vdim\V=V$ which is $\alpha$-viable for $\Z$, then $\Y_\varepsilon=\trim_\varepsilon^A(\Y)$ is $\delta'$-viable for $\Z$ where $\delta'=\delta+\sqrt{\varepsilon V}+\sqrt{\alpha}$.
\end{lemma}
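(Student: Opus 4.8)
The plan is to bound $\|P_{\Z^\perp}\ket\psi\|^2$ for an arbitrary unit vector $\ket\psi\in\Y_\varepsilon\otimes\H_C$, since this is precisely the claim that $\Y_\varepsilon$ is $\delta'$-viable for $\Z$. As $\P_A$ acts only on $\H_A$ we have $\Y_\varepsilon\otimes\H_C=\Pi(\Y\otimes\H_C)$ for $\Pi:=\P_A\otimes\id_{BC}$, so $\ket\psi=\Pi\ket\phi$ for some $\ket\phi\in\Y\otimes\H_C$; I would take the preimage of minimal norm, so $\ket\phi=\ket\psi+\ket{\phi^\perp}$ with $\ket{\phi^\perp}=(\id-\Pi)\ket\phi$ orthogonal to $\ket\psi$ and $\|\ket\phi\|^2=1+t^2$, where $t:=\|\ket{\phi^\perp}\|$ is the ``trim length''. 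Expanding $\langle\psi|P_{\Z^\perp}|\psi\rangle=\langle\phi|P_{\Z^\perp}|\phi\rangle-2\,\opn{Re}\langle\psi|P_{\Z^\perp}|\phi^\perp\rangle-\|P_{\Z^\perp}\ket{\phi^\perp}\|^2$ and using $\delta$-viability of $\Y$ on the unit vector $\ket\phi/\|\ket\phi\|\in\sphere(\Y\otimes\H_C)$ yields
\[\|P_{\Z^\perp}\ket\psi\|^2\ \le\ \delta\,(1+t^2)\ +\ 2\,\|P_{\Z^\perp}\ket\psi\|\,t ,\]
so the whole statement reduces to a good bound on $t$.

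The heart of the argument is the estimate on $t$. Writing $t^2=\opn{tr}\big[(\id_A-\P_A)\rho^\phi_A\big]$ with $\rho^\phi_A=\opn{tr}_{BC}\ket\phi\bra\phi$, I would use (i) $\rho^\phi_A\ople\|\ket\phi\|^2\rho^\Y_A$, obtained by expanding $\ket\phi/\|\ket\phi\|$ in a Schmidt basis across the $A\,|\,BC$ cut; and (ii) that $\id_A-\P_A=\one_{[0,\varepsilon)}(\rho^\Y_A)$ is the spectral projector of $\rho^\Y_A$ onto eigenvalues $<\varepsilon$, so that (from $[\lambda<\varepsilon]\le\varepsilon/\lambda$) $\id_A-\P_A\ople\varepsilon\,(\rho^\Y_A)^{+}$ on the support of $\rho^\Y_A$. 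The naive consequence $t^2\le\varepsilon\|\ket\phi\|^2\opn{rank}(\rho^\Y_A)$ is fatal, because $\opn{rank}\rho^\Y_A$ is the bond dimension of $\Y$ (and a similar hazard is $\dim\H_B$). Instead one must peel off the low-dimensional reference $\V$: the part of the estimate supported on $\V$ contributes at most $\varepsilon\cdot\opn{rank}P_\V=\varepsilon V$ (via $(\rho^\Y_A)^{+}\P_A\ople\varepsilon^{-1}\P_A$, sandwiched with $(\rho^\Y_A)^{1/2}$ — the sandwich is essential since $\P_A$ and $P_\V$ need not commute), while the part outside $\V$ is controlled because $\ket\phi$ lies near $\Z$ ($\delta$-viability of $\Y$) and $\Z$ lies near $\V\otimes\H_{BC}$ ($\alpha$-viability of $\V$), giving a contribution $O(\alpha+\delta)\cdot\|\ket\phi\|^2$. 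Solving the resulting self-referential inequality against $\|\ket\phi\|^2=1+t^2$ gives $t^2=O(\varepsilon V+\alpha+\delta)$ — crucially independent of $\dim\Y$ and of $\dim\H_B$ — whenever $\varepsilon V+\alpha+\delta$ is bounded away from $1$ (otherwise $\delta'\ge1$ and the claim is vacuous).

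Substituting this into $\|P_{\Z^\perp}\ket\psi\|^2\le\delta(1+t^2)+2\|P_{\Z^\perp}\ket\psi\|\,t$, solving the quadratic in $\|P_{\Z^\perp}\ket\psi\|$, and simplifying with $\sqrt{x+y}\le\sqrt x+\sqrt y$ to dissolve the cross terms and the $\delta t^2$ term, one lands on $\|P_{\Z^\perp}\ket\psi\|^2\le\delta+\sqrt{\varepsilon V}+\sqrt\alpha$ up to the bookkeeping of absolute constants (using $\varepsilon V\le\sqrt{\varepsilon V}$, $\alpha\le\sqrt\alpha$, and $\sqrt{\delta\varepsilon V},\sqrt{\delta\alpha}\le\tfrac12\delta+\tfrac12\sqrt{\varepsilon V}+\tfrac12\sqrt\alpha$ on the relevant range).

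I expect the trim-length estimate of the second paragraph to be the only genuine obstacle. The key point to get right is that the dimension factor accompanying $\varepsilon$ must be $V$, the dimension of the reference subspace, rather than the bond dimension of $\Y$ or $\dim\H_B$ — which is exactly why one works with the single vector $\ket\phi$ (not with the total trimmed weight $\opn{tr}[(\id_A-\P_A)\rho^\Y_A]$, which can be large) and splits the estimate through $\V$, keeping track of the non-commutativity of $\P_A$ and $P_\V$. Everything else is triangle inequalities and elementary scalar manipulations.
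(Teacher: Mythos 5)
Your proof attempt goes in the wrong direction, and this is fatal rather than cosmetic. You bound $\|P_{\Z^\perp}\ket\psi\|$ for $\ket\psi\in\sphere(\Y_\varepsilon\otimes\H_C)$, i.e., you argue that every unit vector in the trimmed space is close to $\Z$. The paper's proof goes the other way: it takes an arbitrary $\ket z\in\sphere(\Z)$ and exhibits a vector $\ket{y'}=\Bar\P_+\ket y\in\Y_\varepsilon\otimes\H_C$ with $\bracket z{y'}$ close to $1$ --- i.e., $\Z$ is approximately \emph{contained} in $\Y_\varepsilon\otimes\H_C$. Although the displayed definition of ``$\mu$-overlapping'' appears to match your reading (it is most likely stated with $\Y$ and $\Z$ swapped), the convention used in the paper's own proof, in the random-sampling lemma (whose degradation factor $\nu\approx\frac VY\mu$ is the signature of projecting a fixed $\ket z$ onto a random $V$-dimensional subspace), and in the symmetry lemma used in Lemma~\ref{viableclose}, is unambiguously the containment direction. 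Your version cannot even be true in the algorithm's regime, since $\dim(\Y\otimes\H_C)\gg\dim\Z$, whereas ``every unit vector of $\bar\Y$ has $\|P_\Z\cdot\|^2\ge1-\delta$'' would force $\dim\bar\Y\lesssim\dim\Z$.

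You are also internally inconsistent about which direction viability points: in bounding the trim length you read ``$\Y$ is $\delta$-viable for $\Z$'' as ``every $\phi\in\bar\Y$ lies near $\Z$,'' yet in the same sentence read ``$\V$ is $\alpha$-viable for $\Z$'' as ``$\Z$ lies near $\V\otimes\H_{BC}$.'' Under either consistent reading, one of the two deductions in your trim-length estimate fails. To your credit, you correctly identified the genuine technical content: the naive bound introduces an $\opn{rank}\rho_A^\Y$ or $\dim\H_C$ factor, and the fix is to route the estimate through the low-dimensional reference $\V$ so that the final error is $\sqrt{\varepsilon V}+\sqrt\alpha$. This is exactly what the paper does, but starting from $\ket z\in\sphere(\Z)$: writing $\bra z\Bar\P_-\ket y=\bra z\Bar\P_\V\Bar\P_-\ket y+\bra z\Bar\P_{\V^\perp}\Bar\P_-\ket y$, the first term is controlled by $\tr(\P_-\P_\V\P_-\rho_A^\Y)\le\varepsilon V$ and the second by $\|\Bar\P_{\V^\perp}\ket z\|\le\sqrt\alpha$, avoiding entirely the self-referential inequality in $t$. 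I would recommend reorienting the argument around a fixed ground-state vector, after which the Cauchy--Schwarz/trace manipulation you envisage becomes a two-line computation rather than a quadratic-inequality chase.
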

\begin{proof}
Introduce projectors $\P_+=\one_{[\varepsilon,\infty)}(\rho^\Y_A)$ and $\P_-=\one_{[0,\varepsilon)}(\rho^\Y_A)$ on $\H_A$. Denote extensions of operators and subspaces as $\Bar\P=\P\otimes\id_{BC}$ and $\bar\Y=\Y\otimes\H_C$.

Given any $\ket z\in\sphere(\Z)$ pick $\ket y\in\sphere(\bar\Y)$ satisfying $\bracket zy\ge1-\delta$. Let $\ket{y'}=\Bar\P_+\ket y$ so that $\ket{y'}\in\bar{\Y_\varepsilon}$ and $\|\ket{y'}\|\le1$. Then,
	\begin{equation}\label{decomposition}
	\bracket zy-\bracket{z}{y'}=\bra z\Bar\P_-\ket y=\bra z\Bar\P_\V\Bar\P_-\ket y+\bra z\Bar\P_{\V^\perp}\Bar\P_-\ket y.\end{equation}
	Bound the first term on the RHS by
	\begin{align*}\|\Bar\P_\V\Bar\P_-\ket y\|&=\sqrt{\tr\big(\P_-\P_\V\P_-\tr_{BC}(\ket y\bra y)\big)}
	\le\sqrt{\tr\big(\P_-\P_\V\P_-\rho_A^{\Y}\big)}\le\sqrt{\varepsilon V}.\end{align*}
	since $\|\P_-\rho_A^{\Y}\|\le\varepsilon$ and $\opn{rank}\P_\V=V$. Bound the second term on the RHS of \eqref{decomposition} by $\|\Bar\P_{\V^\perp}\ket z\|\le\sqrt{\alpha}$. By \eqref{decomposition}, $\bracket{z}{y'}\ge\bracket{z}{y}-\sqrt{\varepsilon V}-\sqrt{\alpha}$.
\end{proof}

\begin{corollary}\label{globaltrim}
	Suppose $\Z\subsp\H_{1\ldots j\ldots\wdth}$ is such that for each $i$ there exists a $\alpha$-viable space $\V_{[1,i]}\subsp\H_{[1,i]}$ for $\Z$ with $\vdim{\V_{[1,i]}}\le V$. If $\Y\subsp\H_{[1,j]}$ is $\delta$-viable for $\Z$ then $\Trim_\varepsilon\Y$ is $\delta'$-viable for $\Z$ where $\delta'=\delta+\wdth(\sqrt{\varepsilon V}+\sqrt{\alpha})$.
\end{corollary}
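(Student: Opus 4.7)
The strategy is to iterate Lemma~\ref{bipartitetrimlemma} once per bipartite trim that appears in the definition of $\Trim_\varepsilon$. Explicitly, I would unwind the composition by setting $\Y_j := \Y$ and $\Y_i := \trim_\varepsilon^{[1,i]}(\Y_{i+1})$ for $i = j-1, j-2, \ldots, 1$, so that $\Trim_\varepsilon(\Y) = \Y_1$, and then prove by downward induction on $i$ that $\Y_i$ is $\delta_i$-viable for $\Z$, where $\delta_i := \delta + (j-i)\bigl(\sqrt{\varepsilon V} + \sqrt{\alpha}\bigr)$. The base case $i = j$ is simply the hypothesis of the corollary.

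For the inductive step I would apply Lemma~\ref{bipartitetrimlemma} with the three-way split $A = \H_{[1,i]}$, $B = \H_{[i+1,j]}$, $C = \H_{[j+1,\wdth]}$. Under this split $\Z \subsp \H_{ABC}$, and by the inductive hypothesis $\Y_{i+1} \subsp \H_{AB}$ is $\delta_{i+1}$-viable for $\Z$. The assumed space $\V_{[1,i]} \subsp \H_A$ with $\vdim{\V_{[1,i]}} \le V$ supplies the $\alpha$-viable subspace required by the lemma. Its conclusion then gives that $\Y_i = \trim_\varepsilon^A(\Y_{i+1})$ is $\bigl(\delta_{i+1} + \sqrt{\varepsilon V} + \sqrt{\alpha}\bigr)$-viable, i.e.\ $\delta_i$-viable, for $\Z$, closing the induction.

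The errors telescope: after all $j - 1 \le \wdth$ trims we have $\delta_1 \le \delta + \wdth\bigl(\sqrt{\varepsilon V} + \sqrt{\alpha}\bigr)$, which is the stated bound. I do not expect any serious obstacle here; the only thing to be careful about is matching the bipartite setup of Lemma~\ref{bipartitetrimlemma} to each cut, and in particular keeping track that the $C$-factor $\H_{[j+1,\wdth]}$ remains the same throughout the induction even as the $A/B$ split shifts inward at each step.
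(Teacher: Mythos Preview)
Your proposal is correct and is exactly the intended argument: the paper states this as an immediate corollary of Lemma~\ref{bipartitetrimlemma} without a separate proof, and iterating that lemma once per cut in the composition defining $\Trim_\varepsilon$ is precisely how the bound is obtained. Your bookkeeping with the tripartition $A=\H_{[1,i]}$, $B=\H_{[i+1,j]}$, $C=\H_{[j+1,\wdth]}$ and the telescoping errors $\delta_i=\delta+(j-i)(\sqrt{\varepsilon V}+\sqrt\alpha)$ is clean and matches the paper's setup.
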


\subsection{Dimension reduction by sampling \cite{arad_rigorous_2017}}

\label{dimred}
Having analyzed the AGSP which achieves the improvement of the overlap we now recall a standard tool for entanglement reduction. 
\begin{lemma}[\cite{arad_rigorous_2017} lemma 5]\label{randlem}
Let $\GS\subsp\Hl\otimes\Hr$ be a subspace with dimension $D$ and let $\W\subsp\Hl$ be left $\mu$-overlapping onto $\GS$ with $\vdim\W=W$. Then a Haar-uniformly random subspace $\V\subsp\W$ of dimension $V\le W$ is left $\nu$-overlapping onto $\GS$ with probability at least $1-\eta$ where
\[\nu=\frac{V}{8W}\cdot\mu\aand\eta=(1+2\nu^{-1/2})^D We^{-V/16}.\]
\end{lemma}
Since $1+2x\le 3x$ for $x>1$ (and in particular for $x=\nu^{-1/2}\ge\sqrt8$) we have the bound on the error probability:
\begin{equation}\eta<(9/\nu)^{D/2}We^{-V/16}.\label{ttbound}\end{equation}
Applying the probabilistic method we obtain:
\begin{corollary}\label{probmeth}
Let $\W\subsp\Hl$ of dimension $W$ be left $\mu$-overlapping onto $\GS\subsp\Hl\otimes\Hr$ with $\vdim\GS=D$. For any $0<\nu\le\mu$ there exists a subspace $\V\subsp\W$ which is left $\nu$-overlapping onto $\GS$ and has dimension
\begin{equation}\label{thisV}V=\Big\lceil 8\Big(W\cdot\frac\nu\mu\vee\big(D\log(9/\nu)+2\log W\big)\Big)\Big\rceil\wedge W.\end{equation}
\end{corollary}
\begin{proof}
	If $V=W$ then $\V=\W$ suffices. Otherwise let $\tilde\nu=\frac{V}{8W}\mu$ be the overlap from lemma \ref{randlem} corresponding to the choice \eqref{thisV} of $V$ and let $\tilde\eta=(9/\nu)^{D/2}We^{-V/16}$. Then $\log \tilde\eta=\tfrac D2\log(9/\nu)+\log W-V/16\le0$ by the choice of $V$.  
	By \eqref{ttbound} the error probability in lemma \ref{randlem} is strictly below $\tilde\eta\le1$ so by the probabilistic method there exists a left $\tilde\nu$-overlapping space. But $\tilde\nu\ge\nu$ which proves the claim.
\end{proof}

%
%
%

\subsection{Analysis of algorithm \ref{thealg}}


\begin{corollary}\label{quanttrim}
Suppose $R\Delta\le1/2$. Given $\delta$ there exists a choice $\varepsilon=\frac1\Dbound(\frac\delta{R\wdth})^{O(1)}$ such that $\trim_\varepsilon$ increases the viability error by at most $\delta$.
\end{corollary}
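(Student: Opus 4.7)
The plan is to combine the two tools that were set up immediately before the statement: corollary \ref{globaltrim}, which controls the viability-error increment $\delta' - \delta \le \wdth(\sqrt{\varepsilon V} + \sqrt{\alpha})$ in terms of the existence of a small $\alpha$-viable space of dimension $V$, and lemma \ref{existviable}, which supplies such a space with $V \lesssim \alpha^{-1}\Dbound R^{O(1)}$ under the hypothesis $R\Delta \le 1/2$.

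First I would allocate the error budget $\delta$ evenly between the two additive terms in corollary \ref{globaltrim}. Setting $\wdth\sqrt{\alpha}\le \delta/2$ forces $\alpha \lesssim \delta^2/\wdth^2$, and lemma \ref{existviable} then produces a space of dimension $V \lesssim \alpha^{-1}\Dbound R^{O(1)} \lesssim (\wdth/\delta)^2 \Dbound R^{O(1)}$ that is $\alpha$-viable for $\Z$ at every cut $i = 1,\dots,\wdth-1$. Substituting this $V$ into the second bound $\wdth\sqrt{\varepsilon V}\le \delta/2$ yields the requirement
\begin{equation*}
\varepsilon \;\lesssim\; \frac{\delta^2}{\wdth^2 V} \;\lesssim\; \frac{1}{\Dbound}\Bigl(\frac{\delta}{R\wdth}\Bigr)^{O(1)},
\end{equation*}
which is exactly the claimed scaling.

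With this choice of $\varepsilon$, corollary \ref{globaltrim} applied at each of the at most $\wdth$ cuts used by $\Trim_\varepsilon$ shows that the viability error after trimming is at most $\delta$ larger than before trimming. I expect no genuine obstacle here; the only care needed is that lemma \ref{existviable} is invoked for every prefix $[1,i]$ simultaneously (taking the maximum dimension), and that the polynomial factors in $R$, $\wdth$, and $\Dbound^{-1}$ from $\alpha$ and $V$ are absorbed into the $O(1)$ exponent, so that the final expression collapses to $\varepsilon = \Dbound^{-1}(\delta/(R\wdth))^{O(1)}$.
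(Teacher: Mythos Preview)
Your proposal is correct and follows essentially the same approach as the paper's own proof: set $\alpha=(\delta/(2\wdth))^2$, invoke lemma~\ref{existviable} to obtain $V\lesssim(\wdth/\delta)^2\Dbound R^{O(1)}$, and then choose $\varepsilon=(\delta/(2\wdth))^2/V$ so that corollary~\ref{globaltrim} gives the desired bound. The only cosmetic difference is that you phrase the even split of the error budget explicitly, whereas the paper writes down the specific choices of $\alpha$ and $\varepsilon$ directly.
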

\begin{proof}
	By corollary \ref{globaltrim} it suffices to verify the existence of $\alpha$-viable subspaces of dimension $V$ such that $\wdth(\sqrt{\varepsilon V}+\sqrt{\alpha})\le\delta$. Let $\alpha=(\frac{\delta}{2\wdth})^2$. By corollary \ref{existviable} we can take $V\lesssim (\nofrac{\wdth}{\delta})^2\Dbound R^{O(1)}$. Then pick $\varepsilon=\frac1V(\frac{\delta}{2\wdth})^2$.
\end{proof}

\begin{lemma}\label{Ylemma}
Given $0<\delta\le1/2$ suppose $R\Delta\le\frac\delta{32\dd}$. Then there exists a choice $V=\Theta(\Dbound\log(R\dd)+\log\wdth)$ and $\varepsilon=\frac1\Dbound(\frac\delta{R\wdth})^{O(1)}$ such that with probability at least $1/2$ each $\Y_{[1,i]}$ is $\delta$-viable for $\Z$ in algorithm \ref{thealg}.
\end{lemma}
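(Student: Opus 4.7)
The plan is to prove the lemma by induction on $i=0,1,\ldots,\wdth$ together with a union bound over the randomness at each step. The base case $\Y_{[]}=\CC$ is trivially $0$-viable. For the inductive step, assume $\Y_{[1,i-1]}$ is $\delta$-viable for $\Z$; this immediately implies $\Y_{[1,i-1]}\otimes\H_i$ is left $\delta$-viable for $\Z$ with respect to the bipartition at position $i$. A key bookkeeping step is the bound $\vdim\Y_{[1,i-1]}\le RV$: it follows because $\Y_{[1,i-1]}=\Trim_\varepsilon(\AGSP_{[1,i-1]}\V_{[1,i-1]})$ is contained in $\AGSP_{[1,i-1]}\V_{[1,i-1]}$, and the image of the $V$-dimensional $\V_{[1,i-1]}$ under the $R$-dimensional operator subspace $\AGSP_{[1,i-1]}$ has dimension at most $RV$. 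Hence $\vdim(\Y_{[1,i-1]}\otimes\H_i)\le Y:=RV\dd$.

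Next I would apply lemma \ref{randlem} to the Haar-uniform $V$-dimensional $\V_{[1,i]}\subset\Y_{[1,i-1]}\otimes\H_i$: except with failure probability $\eta_i=(1+2\nu^{-1/2})^\Dbound Y e^{-V/16}$, the random subspace is $\nu$-overlapping onto $\Z$ with $\nu=\frac{V}{8Y}(1-\delta)\ge\frac{1}{16R\dd}$ (using $\delta\le 1/2$). Then the amplification lemma \ref{sharp} gives post-AGSP error $\delta'$ satisfying $\frac{\delta'}{1-\delta'}\le\Delta\cdot\frac{1-\nu}{\nu}\le 16R\Delta\dd\le\delta/2$ by the hypothesis $R\Delta\le\delta/(32\dd)$, so $\delta'\le\delta/2$. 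Finally I would invoke corollary \ref{quanttrim} with error budget $\delta/2$ (which only rescales the $O(1)$ constants in the definition of $\varepsilon$) to conclude that $\Y_{[1,i]}=\Trim_\varepsilon(\AGSP_{[1,i]}\V_{[1,i]})$ is $\delta$-viable for $\Z$. Note that the underlying lemma \ref{existviable} requires $R\Delta\le 1/2$, which follows from the stronger hypothesis $R\Delta\le\delta/(32\dd)$.

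Union-bounding over the $\wdth$ iterations, the estimate $\nu^{-1/2}\le 4\sqrt{R\dd}$ gives $\eta_i\le 9^\Dbound(R\dd)^{\Dbound/2}\cdot RV\dd\cdot e^{-V/16}$, and requiring $\wdth\cdot\max_i\eta_i\le 1/2$ yields $V=\Theta(\Dbound\log(R\dd)+\log\wdth)$ for a sufficiently large implied constant, as claimed. The main point to watch is that the dimension bound $\vdim\Y_{[1,i-1]}\le RV$ must be maintained inductively across all $\wdth$ steps—otherwise $\nu$ would shrink geometrically and the AGSP amplification budget would be exhausted—but this bound is immediate from the form of the algorithm rather than from any quantitative property of trimming, so the remainder of the argument is a routine plug-in of the cited lemmas.
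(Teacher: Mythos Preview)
Your proposal is correct and follows essentially the same route as the paper's own proof: induction on $i$ with the dimension bound $\vdim\Y_{[1,i-1]}\le RV$, lemma~\ref{randlem} for the random sampling to give overlap $\nu=\frac1{16R\dd}$, lemma~\ref{sharp} for the amplification $\Delta/\nu\le\delta/2$, corollary~\ref{quanttrim} for the trimming error $\delta/2$, and a union bound over the $\wdth$ iterations to fix $V=\Theta(\Dbound\log(R\dd)+\log\wdth)$. The paper organizes the randomness slightly differently (fixing $V$ first so that all sampling steps succeed simultaneously with probability $\ge 1/2$, then running the induction deterministically inside that event), but the substance and the quantitative bookkeeping are identical.
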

\begin{proof}
	At the beginning of the $i$\textsuperscript{th} iteration, $\vdim{\Y_{[1,i-1]}}\le\vdim{\AGSP_{[1,i-1]}\V_{[1,i-1]}}\le RV$. $\Y_{[1,i-1]}\otimes\H_i$ then has dimension at most $\bar Y=\dd RV$.

Let $\nu=\frac1{16\dd R}$. The error probability of lemma \ref{randlem} is bounded by $\eta=(9/\nu)^{\Dbound/2}\bar Ye^{-V/16}$ \eqref{ttbound}.  Pick $V$ such that
\[V-16\log V\ge 8\Dbound\log(144 R\dd)+16\log(2\dd R\wdth).\]
Then $\eta\le(9/\nu)^{\Dbound/2}\bar Ye^{-\frac{\Dbound}2\log(9\cdot16R\dd)-\log(2\dd RV\wdth)}=\frac1{2\wdth}$.  By a union bound lemma \ref{randlem} succeeds at each iteration with probability at least $1/2$.  We perform an induction within this event.\smallbreak\noindent
\textbf{Induction step. }
By the induction hypothesis $\Y_{[1,i-1]}$ is $1/2$-viable for $\Z$.
As lemma \ref{randlem} succeeds, $\V$ has left overlap $\nu=\frac1{16\dd R}$ onto $\Z$. By lemma \ref{sharplemma} $\AGSP_{[1,i]}\V$ is $\delta/2$-viable for $\Z$ since $\Delta/\nu=16\dd R\Delta\le\delta/2$. By corollary \ref{quanttrim} the trimming increases the error only by $\delta/2$, so $\Y_{[1,i]}$ is $\delta$-viable for $\Z$.
\end{proof}

Having shown that $\Y_{[1,\wdth]}$ is $\delta$-viable for $\Z$ it remains to analyze the restriction on the last line of algorithm \ref{thealg}.

\begin{lemma}\label{viableclose}
	If $\Y=\Y_{[1,\wdth]}$ is $\delta$-viable for $\Z$ then the output of algorithm \ref{thealg} is $2\delta$-close to $\Z$. 
\end{lemma}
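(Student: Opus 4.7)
The plan is to analyze the restricted operator $\tilde H|_\Y$ via the block-diagonal structure of $\tilde H=\id-\AGSP^\dag\AGSP$. Because $\AGSP=\id_\Z\oplus\AGSP_{\Z^\perp}$ with $\|\AGSP_{\Z^\perp}\|\le\sqrt\Delta$, $\tilde H$ annihilates $\Z$ and has spectrum in $[1-\Delta,1]$ on $\Z^\perp$. This yields the operator sandwich $(1-\Delta)P_{\Z^\perp}\le\tilde H\le P_{\Z^\perp}$ on $\H$, which restricts to $(1-\Delta)P_{\Z^\perp}|_\Y\le\tilde H|_\Y\le P_{\Z^\perp}|_\Y$.

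For the first direction (viability of $\tilde\Z$ for $\Z$) I would take any unit $\ket y\in\tilde\Z\subseteq\Y$ and combine the defining inequality $\bra y\tilde H\ket y\le\delta$ with the lower sandwich to get $\|P_{\Z^\perp}\ket y\|^2\le\delta/(1-\Delta)\le 2\delta$, using $\Delta\le 1/2$ which follows from the ambient $R\Delta\le\delta/(32\dd)$. Hence $\|P_\Z\ket y\|^2\ge 1-2\delta$.

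For the second direction (viability of $\Z$ for $\tilde\Z$) I would invoke the CS decomposition of $\Y$ relative to $\Z$. With principal angles $\{\theta_i\}$, the eigenvalues of $P_{\Z^\perp}|_\Y$ are $\sin^2\theta_i$, and $\delta$-viability forces $\sin^2\theta_i\le\delta$ on the $D$ angles associated with $\Z$, while any additional principal angles (if $\dim\Y>D$) equal $\pi/2$ and contribute eigenvalues $1$. The sandwich then forces $\tilde H|_\Y$ to have $D$ eigenvalues in $[0,\delta]$ and the rest in $[1-\Delta,1]$, a genuine gap since $\delta\le 1/2\le 1-\Delta$; in particular $\tilde\Z$ is precisely the $D$-dimensional span of the bottom eigenvectors, and $\tilde H|_\Y\ge(1-\Delta)P_{\tilde\Z^\perp\cap\Y}$. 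For unit $\ket z\in\Z$ set $\ket w=P_\Y\ket z$; viability bounds $\|P_{\Y^\perp}\ket z\|^2\le\delta$, and since $P_{\Z^\perp}\ket z=0$ it follows that $\|P_{\Z^\perp}\ket w\|^2\le\|P_{\Y^\perp}\ket z\|^2\le\delta$, so $\bra w\tilde H|_\Y\ket w\le\delta$ and thence $\|P_{\tilde\Z^\perp\cap\Y}\ket w\|^2\le\delta/(1-\Delta)$. Since $\tilde\Z\subseteq\Y$ gives $P_{\tilde\Z}\ket z=P_{\tilde\Z}\ket w$, collecting yields $\|P_{\tilde\Z}\ket z\|^2\ge 1-2\delta$ up to the negligible $\Delta$-correction.

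The main obstacle is the second direction, specifically establishing the spectral gap of $\tilde H|_\Y$ that cleanly separates $\tilde\Z$ from its complement in $\Y$. Both sides of the sandwich are essential here: the upper one pushes the $D$ directions of $\Y$ paired with $\Z$ into the low band, while the lower one lifts the directions orthogonal to $\Z$ into the high band. Without this gap the Markov-type estimate on $\|P_{\tilde\Z^\perp\cap\Y}\ket w\|^2$ would be vacuous at a threshold of order $\delta$ and the second direction would collapse.
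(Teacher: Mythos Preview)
Your argument is correct and in fact more self-contained than the paper's. The paper proves only your first direction directly (its item~1, with the same bound $\delta/(1-\Delta)\le2\delta$), then shows $\dim\tilde\Z\ge\dim\Z$ by exhibiting the subspace $\Z'=P_\Y\Z\subseteq\Y$ on which $\tilde H$ has energy $\le\delta$, and finally invokes an external ``symmetry lemma'' from \cite{abrahamsen_sharp_2020} to obtain the other direction with the same error. You instead read off the full spectral structure of $\tilde H|_\Y$ from the sandwich $(1-\Delta)P_{\Z^\perp}|_\Y\ople\tilde H|_\Y\ople P_{\Z^\perp}|_\Y$ and the CS decomposition, establishing both $\dim\tilde\Z=D$ and the gap $\tilde H|_\Y\opge(1-\Delta)P_{\tilde\Z^\perp\cap\Y}$ in one stroke. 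This is a genuinely more elementary route, trading the cited lemma for a direct Weyl/min--max argument.

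One quantitative remark: your explicit direction-2 computation gives $\|P_{\tilde\Z}\ket z\|^2\ge(1-\delta)-\delta/(1-\Delta)$, i.e.\ error $\big(2+\tfrac{\Delta}{1-\Delta}\big)\delta$ rather than $2\delta$; you flag this as a ``negligible $\Delta$-correction'', but the lemma as stated asks for $2\delta$. The fix is already implicit in your own argument: you proved $\dim\tilde\Z=D=\dim\Z$, so the principal angles between $\tilde\Z$ and $\Z$ are symmetric, and your direction-1 bound $\delta/(1-\Delta)\le2\delta$ applies verbatim to direction~2. That symmetry is exactly the content of the lemma the paper cites, so you may either invoke it or note that equal dimensions plus one-sided $2\delta$-viability gives two-sided $2\delta$-closeness.
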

\begin{proof}
	We show more precisely that $\tilde\Z$ is $\delta/\tilde\gamma$-close to $\Z$ where $\tilde\gamma=1-\Delta\ge1/2$. By the symmetry lemma it suffices to show that \begin{center}
\begin{enumerate*}
	\item\label{ZZ} $\Z$ is ${\delta/\tilde\gamma}$-viable for $\tilde\Z$\aand\vphantom{.1em}
	\item\label{dims}$\vdim{\tilde\Z}\ge\vdim\Z$. 
\end{enumerate*}\end{center}

\ref{ZZ}. By definition $\tilde\Z\subsp\Y$ is such that $H|_{\tilde\Z}\ople\delta$.
Since $\AGSP$ is a $\Delta$-AGSP we can write $\tilde H=\id-\AGSP^\dag\AGSP=0_\Z\oplus\tilde H_{\Z^\perp}$ where $\tilde\gamma\ople\tilde H_{\Z^\perp}$. 
So $\tilde\gamma\P_{\tilde\Z}\P_{\Z^\perp}\P_{\tilde\Z}\ople\P_{\tilde\Z}\tilde H\P_{\tilde\Z}\ople\delta\P_{\tilde\Z}$, which implies that $\Z$ is $\delta/\tilde\gamma$-viable for $\tilde\Z$. 

\ref{dims}.
Since $\Y$ is $\delta$-viable for $\Z$, lemma \ref{obv} implies that $\Z':=\P_\Y\Z\approx_\delta\Z$. Therefore $\P_{\Z'}\tilde H\P_{\Z'}\ople\P_{\Z'}\P_{\Z^\perp}\P_{\Z'}\ople\delta$. So $\Z'$ is a subspace of $\Y$ where $\tilde H$ has energy at most $\delta$ which implies $\vdim{\Z'}\le\vdim{\tilde\Z}$. Item \ref{dims} follows since $\vdim\Z=\vdim{\Z'}$.
\end{proof}

\begin{proof}[Proof of proposition \ref{genericalgcor}]
By lemmas \ref{Ylemma} and \ref{viableclose} we can take $V=\Theta(\Dbound\log(R\dd)+\log\wdth)$, $\varepsilon=\frac1\Dbound(\frac\delta{R\wdth})^{O(1)}$, and $\delta=\dgoal/2$.

Since $\vdim{\AGSP_{[1,i]}\V_{[1,i]}}\le RV$ the bond dimension of the trimmed space $\Y_{[1,i]}$ is bounded by $RV/\varepsilon$ in each iteration. This bounds the bond dimension of $\Y_{[1,i-1]}\otimes\H_{i}$ at the beginning of each iteration by $\dd RV/\varepsilon$, and the same bound holds for the bond dimension of $\V_{[1,i]}$. So the largest bond dimension encountered throughout the algorithm, that of $\AGSP_{[1,i]}\V_{[1,i]}$ before trimming, is bounded by $\dd R_{\opn{max}}2V/\varepsilon=(\Dbound R\wdth/\delta_{\opn{goal}})^{O(1)}\dd\log\dd$. 
The largest subspace dimension is $|Y_{1,i-1}\otimes\H_i| \le RV \dd$. Both are bounded by $(D\dd R_{\opn{max}}\wdth/\delta_{\opn{goal}})^{O(1)}$.
\end{proof}

\section{Constructing an implementable AGSP}\label{constMPO}

\newcommand\Hyr{H_y^{\opn{row}}}
\newcommand\HSr{H_S^{\opn{rows}}}
\newcommand\HS[1]{H_{#1}^{\opn{rows}}}

\newcommand\Hrow\Hyr
\newcommand\Hrows\HSr

\newcommand\ny\hte
\newcommand\nx\wdth

\newcommand\Rmax{R_{\opn{max}}}

The implementable AGSP will be a straightforward modification of the AGSP defined by Anshu, Arad, and Gosset to prove the area law \cite{2darea}
\subsection{Area law AGSP}
For $y = 1,\ldots,\hte$ let $\Hyr$ be a Hamiltonian acting on the horizontal rectangle $[\wdth]\times\{y-1,y\}$ consisting of two adjacent rows. $\Hyr$ will not be the input Hamiltonian $H = \sum_{x\le\wdth,y\le\hte} h_{x,y}$ restricted to these two rows but rather a modified one with small operator norm. For an interval $S\subset[h]$ representing a contiguous set of rows let $\HSr = \sum_{y\in S} \Hyr$. \cite{2darea} defines the following set of operators constructed as products of local Hamiltonian terms.
\begin{definition}[\cite{2darea} definition 3.4]
For $\alpha,\beta > 0$, let $P(\alpha,\beta)$ be the set of products
\[(\HS{S_1})^{r_1}\cdots(\HS{S_k})^{r_k},\qquad r_1 +\cdots+ r_k\le\alpha,\:k\le\beta,\]
where $r_1,\ldots,r_k$ are positive integers and $S_1,\ldots,S_k\subset[\hte]$ are intervals.
\end{definition}
\begin{fact}\label{182}
The AGSP $K_{arealaw}$ constructed in \cite{2darea} is a sum of $\hte^{O(\beta)}$ operators
in $P(\alpha,\beta)$, where $\alpha<\beta = \hte^{1+o(1)}$.
\end{fact}
Suppose each $\Hyr$ satisfies a global\footnote{i.e., for each vertical cut at some horizontal coordinate $x$} bound $b$ on the left-right entanglement rank. Then for any horizontal rectangle described by some set of rows $S$, $HSr$ satisfies a global entanglement rank bound $\hte b$ since $\HSr$ is a sum. In turn, the entanglement
rank of a product in $P(\alpha,\beta)$ is bounded by $(\hte b)^\alpha$, so by fact \ref{182}, $K_{arealaw}$ has a global left-right entanglement rank bound 
\begin{equation}\label{112}
\hte^{O(\beta)}(\hte b)^\alpha=\exp(O(\beta\log\hte)+\alpha\log\hte+\alpha\log b)=\exp[\hte^{1+o(1)}(1+\log b)].
\end{equation}
o(1)+ log C
Let $H = \sum_{x\le\nx,y\le\ny} h_{x,y}$ be the local Hamiltonian. Given a vertical cut of interest,
let W be a small interval of column indices around this cut. In \cite{2darea} the row
operators $\Hrow$ (indexed by row $y$) are taken to be
\begin{equation}\label{113}
    \frac1{2|W|}\big[(I-\Pi_{L,y}) + \sum_{x\in W} h_{x,y} + (I-\Pi_{R,y})\big]
\end{equation}
where $L$ (resp. $R$) indexes the columns to the left (resp. right) of $W$, and $\Pi_{L,y}$ (resp. $\Pi_{R,y}$) is the ground space projector for $\sum_{x\in L} h_{x,y}$ (resp. $\sum_{x\in R} h_{x,y}$). Unfortunately, while this operator has constant entanglement across vertical cuts within W, its Schmidt rank across other vertical cuts can be of order $d^{\Omega(\nx)}$ because $\Pi_{L,y}$ entangles all qudits in row $y$ to the left of $W$. Inserting $b = d^{\Theta(\nx)}$ into (1.12) gives an exponent larger than $n = \ny\nx$, i.e., we get exponentially large entanglement rank for vertical cuts distant from $W$.
\subsection{An implementable variant}
Let $\Lambda_L=\prod_{x\in L,y}\Pi_{x,y}$ and $\Lambda_R=\prod_{x\in R,y}\Pi_{x,y}$ be products of ground space projectors for individual interactions, taken in any order. We modify the AGSP of \cite{2darea} by replacing the horizontal operator of \eqref{113} with the following:
\begin{equation}\label{114}
   \Hyr= \frac14(I-\Lambda_L\Lambda_L^\dag)+\frac1{2|W|}\sum_{x\in W} h_{x,y} + \frac14(I-\Lambda_R\Lambda_R^\dag)
\end{equation}
where the left and right products are taken in any order. $\Hrow$ retains the structure
of \eqref{113} within $W$ necessary to apply the amortization bound in \cite{2darea} at the cut of interest. Furthermore, $\Hrow$ now satisfies a constant entanglement bound $b = O(1)$ across any vertical cut, so \eqref{112} implies that the resulting AGSP has entanglement rank bounded by
\[
R_{\opn{max}} = \exp(\ny^{1+o(1)})
\]
across any vertical cut.

\begin{remark}
The global entanglement rank $\Rmax$ is still larger than the entanglement rank R at the cut of interest since the amortization bound is only applied in one place. In particular we will have $R\shrink\le1/2$ but not the analogous bound for $\Rmax$.
\end{remark}
We verify that substituting the row operator \eqref{114} preserves the shrinking factor of the AGSP $K$. To this end we verify that the row operators $\Hrow$ satisfy the \emph{gap}
and \emph{merge} properties of \cite{2darea}:

\begin{proof}[Proof of gap and merge properties (\cite{2darea} definition 3.2).]
\leavevmode
\begin{itemize}
\item\textbf{Gap:}
The \emph{detectability lemma} \cite{anshu_simple_2016} implies that, if there exists a g-coloring of the interactions, we have the operator ordering
\[I-\Lambda_L\Lambda_L^\dag)\ge\frac{\gamma}{1+g^2}(I-\Pi_L).\]
Here we have used the uniform gap assumption to obtain that $\sum_{x\in L}h_{x,y}$ has gap at least $\gamma$. The analogous bound holds for the right half. It follows that
\begin{equation}\label{115}
\frac14(I-\Lambda_L\Lambda_L^\dag)+\frac14(I-\Lambda_R\Lambda_R^\dag)\ge\frac1{2|W|}[(I-\Pi_L)+(I-\Pi_R)],
\end{equation}
whenever $\frac14\frac\gamma{1+g^2}\ge\frac1{2|W|}$, i.e., when $|W|$ is larger than $2\frac{1+g^2}\gamma = O(1/\gamma)$. Since $\gamma$ is constant and $|W|\to\infty$ as $h\to\infty$ in \cite{2darea}, this eventually holds. Adding $\sum_{x\in W}h_{x,y}$ to both sides of \eqref{115} yields that
\begin{equation}
    \Hrow\ge\eqref{113}
\end{equation}
For a rectangle consisting of a contiguous set $S$ of rows we have the analogous operator lower bound for $\Hrows$. Since the kernel of $\Hrows$ still equals the kernel of $\sum_{y\in S}\sum_xh_{x,y}$, the gap property for \eqref{113} carries over to $\Hrow$.

\item\textbf{Merge:}
The merge property involves only the ground spaces of the operators $\Hrow$. But these are exactly the ground spaces of $\sum_xh_{x,y}$ so this property is not changed.

\end{itemize}
\end{proof}

Finally we note that the trade-off $R\shrink\le1/2$ can be strengthened: Equations (47) and (48) of \cite{2dareafull} (full version on ArXiv) show that
\begin{equation}\label{117}
    D\shrink\le\exp\big[-\ny^{1-O((\log\ny)^{-1/4})+\Omega((\log\ny)^{-1/5})}\big]\le\exp\big[-\ny^{1+\Omega((\log\ny)^{-1/5})}\big].
\end{equation}
Pick any ‘base’ number $B$ growing sufficiently slower than exponentially in
$\ny$. More precisely suppose
\[B = \exp\big[\exp\big[o((\log\ny)^{4/5})\big]\big].\]
Then we can absorb
\[\frac{\log\log B}{\log\ny} =o\big((\log\ny)^{-1/5}\big)\]
if the term $\Omega\big((\log\ny)^{-1/5}\big)$ of \eqref{117}. Thus \eqref{117} implies
\begin{equation}\label{118}
R\shrink\le\exp[-\ny^{1+\frac{\log\log B}{\log\ny}}]=\exp[-\ny e^{\log\log B}]=B^{-\ny}.
\end{equation}
We conclude the proof of theorem \ref{maintheorem} by applying algorithm 1 to the imple-
mentable AGSP $\tilde K$.
\begin{proof}[Proof of theorem \ref{maintheorem}]
Let $\H_i = \H_{\{i\}\times[1,\ny]}$ for $i = 1,\ldots,\nx$ and let $\tilde d \simeq \dim(\H_r) =
d^\ny$. For each position $i$ of the vertical cut the modified $\shrink$-AGSP $K_i$ above is given
by an MPO with bond dimension globally bounded by $\Rmax = \ny^{1+o(1)}$ and bond
 dimension $R < \Rmax$ across the $i$th cut satisfying $R\shrink\le (2d)^{-\ny}=2^{-\ny}/\tilde d$, by \eqref{118}.
 
Apply proposition \ref{genericalgcor} with the virtual qudit dimension $\tilde d = d^\ny$ and error parameter $\delta_{\opn{goal}}=2^{6-\ny}$. The time complexity is $(D\Rmax\nx\tilde d\delta^{-1} )^{O(1)} = D^{O(1)} \exp(\ny^{1+o(1)} )$.
\end{proof}

\ifblind
\else
\section{Acknowledgements}
The author thanks Anurag Anshu and David Gosset for helpful discussions. The author is also grateful to Daniel Grier for pointing out an error in a draft.
\fi


\bibliographystyle{alpha}
\bibliography{bib.bib}

\appendix

\section{Standard entropy bound \cite{arad_improved_2012} from partial sums}
\label{thestandarddec}
A bound on the Shannon entropy of a probability distribution can be obtained through a dyadic decomposition by following the argument of \cite{arad_improved_2012} lemma III.3. Given a sequence $\Lambda=(\lambda_1,\lambda_2,\ldots)\in[0,1]^{\NN}$ write the Shannon entropy $\shann(\Lambda)=\sum_ih(\lambda_i)$ where $h(x)=x\log(x^{-1})$.
\begin{claim}[\cite{arad_improved_2012,arad_area_2013}]\label{shannonentropy}
Let $\Lambda=(\lambda_1,\lambda_2,\ldots)\in[0,1]^{\NN}$ be a sequence with $\sum_i\lambda_i\le 1$ and write $\Sigma_I=\sum_{i\in I}\lambda_i\le1$ for $I\subset\NN$. Let $I_0,I_1,\ldots$ be a partition on $\NN$ such that $\Sigma_{I_n}\le \gamma_n$ for some sequence of $\gamma_n\in[0,1]$.  If $|I_n|\ge3$ for each $n$, then
\[\shann(\Lambda_i)\le\log|I_0|+\sum_{n=1}^\infty\gamma_n\log(|I_n|)+\sum_{i=1}^\infty h(\gamma_n).\] 
\end{claim}
\begin{proof}
	Since $h$ is concave Jensen's inequality states that for any set of indices $I$, $\frac1{|I|}\sum_{i\in I}h(\lambda_i)\le h(\tfrac1{|I|}\sum_{i\in I}\lambda_i)$. Rearranging yields:
	\begin{equation}\label{Jensens}\sum_{i\in I}h(\lambda_i)\le |I|\cdot h(\nofrac{\Sigma_I}{|I|}).\end{equation}
	$h$ is increasing on $[0,1/e]$, so if $|I|\ge3$ and $\gamma\le1$ is an upper bound on $\Sigma_I$, then $\sum_{i\in I}h(\lambda_i)\le |I|h(\gamma/|I|)=\gamma\log(|I|\gamma^{-1})$. Apply this bound for each $n=1,2,\ldots$. We also have in particular that $\sum_{i\in I}h(\lambda_i)\le\log|I|$. Apply this for $I_0$.
\end{proof}

\section{Alternative proof of sharp error reduction}
\label{liftingversion}


Let $\Z,\V\subsp\H$ be subspaces such that 
$\P_\Z\P_\V\P_\Z\opge\mu\P_\Z$ (i.e, $\V\supsp_\mu\Z$). Lemmas 1 and 2 of \cite{arad_rigorous_2017} state that for every $\ket z\in\Z$ there exists $\ket v\in\V$ with norm at most $\|v\|\le\mu^{-1}\|z\|$ such that $\P_\Z\ket v=\ket z$. The alternative proof of the error reduction lemma \ref{sharplemma} relies on noticing that this statement can be improved quadratically, i.e., we can replace $\mu^{-1}$ with $\mu^{-1/2}$.
\subsection{Quadratically improved lifting lemma}
\begin{definition}
Let $\Z,\V\subsp\H$ be subspaces such that $\V$ covers $\Z$. Define the \emp{lifting operator} from $\Z$ to $\V$ as $\liftofrom\V\Z=\tofrom\V\Z(\ON\Z\P_\V\IN\Z)^{-1}$.
\end{definition}
\begin{lemma}\label{liftlemma}
Given subspaces $\Z,\V\subsp\H$ such that $\V\supsp_\mu\Z$ with $\mu>0$, the lifting operator $\Z\to\V$ satisfies:
\begin{enumerate}
	\item $\P_\Z\circ\liftofrom\V\Z\ket z=\ket z$ for any $\ket z\in\Z$\qquad \hfill\textup{(lifting property)},
		\item \label{opnorm}$\|\liftofrom\V\Z\|\le\mu^{-1/2}$,
	\end{enumerate}
\end{lemma}
\begin{proof}
	The restricted projection $M=\tofrom\Z\V$ is surjective since $\V$ covers $\Z$, so $M^\dag(MM^\dag)^{-1}=\liftofrom\V\Z$ is a well-defined right-inverse\footnote{This right-inverse is a special case of the Moore-Penrose pseudo-inverse, but its role is not analogous to the pseudoinverse in the proof of \cite{arad_rigorous_2017} lemma 6, which was a pseudoinverse of the \emph{AGSP}.} of $M$. This is the lifting property. For the norm bound we write the polar decomposition $\tofrom\Z\V=SV^\dag$ where $S$ is a positive operator on $\Z$ and $V^\dag$ is the adjoint of an isometry $V:\Z\to\V$ (again using that $M$ is surjective). Since $\V\supsp_\mu\Z$ we have $\mu\id_\Z\ople MM^\dag= SV^\dag VS=S^2$ which implies that $S\opge\sqrt\mu\id_\Z$. Then $\|\liftofrom\V\Z\|=\|VS^{-1}\|\le\mu^{-1/2}$.
\end{proof}

We also write $\liftofrom\V\Z$ in the same way when extending its codomain and viewing it as a map $\Z\to\H$.
By Pythagoras' theorem, $\|z\|^2+\|\P_{\Z^\perp}\liftofrom\V\Z\ket z\|^2=\|\liftofrom\V\Z\ket z\|^2\le\mu^{-1}\|z\|^2$.  Since $\erra=\mu^{-1}-1$, rearranging yields:
\begin{corollary}\label{heightcor}
	Let $\Z,\V\subsp\H$ be such that $\V$ covers $\Z$ with error ratio $\erra$. Then for any $\ket z\in\Z$,
	\begin{equation*}\label{perpdec}\liftofrom\V\Z\ket z=\ket z+\P_{\Z^\perp}\liftofrom\V\Z\ket z\qquad\text{where}\qquad\|\P_{\Z^\perp}\liftofrom\V\Z\|\le\sqrt\erra.\end{equation*}
\end{corollary}

The alternative proof of lemma \ref{sharplemma} can now be finalized essentially as in the proof of \cite{arad_rigorous_2017} lemma 6:
\begin{proof}[Finishing the alternative proof of lemma \ref{sharplemma}]
	Write the AGSP as $\agsp_\Z\oplus\agsp_{\Z^\perp}$. 
	Given an arbitrary unit vector $\ket z\in\Z$ pick $\ket{v'}=\AGSP\circ\liftofrom\V\Z\circ\agsp_\Z^{-1}\ket{z}\in\AGSP\V$. It suffices to show that $\bracket{{z}}{v'}^2\ge\mu'\|v'\|^2$ where $\mu'=\frac1{1+\shrink\erra}$: Applying corollary \ref{heightcor} to $\agsp_\Z^{-1}\ket z$ we have the orthogonal decomposition
	\[\ket{v'}=\ket{z}+\ket{h'}\quad\text{where}\quad\ket{h'}=\agsp_{\Z^\perp}(\P_{\Z^\perp}\liftofrom\V\Z)\agsp_\Z^{-1}\ket{z}.\]
where $\|h'\|\le\|\agsp_{\Z^\perp}\|\cdot\|\P_{\Z^\perp}\liftofrom\V\Z\|\le\sqrt{\shrink\erra}$.  
Then $\|v'\|^2\le1+\shrink\erra$ by Pythagoras', so $\bracket{{z}}{v'}^2/\|v'\|^2=1/\|v'\|^2\ge\mu'$.
\end{proof}

\end{document}